\newtheorem{theorem}{Theorem}[section]
\newtheorem{lemma}[theorem]{Lemma}
\newtheorem{corollary}[theorem]{Corollary}
\newtheorem{definition}{Definition}
\begin{document}
\doublespacing

\title{Quantifying the Information Leakage in Timing Side Channels in Deterministic Work-Conserving Schedulers (Draft)}
\author{
\IEEEauthorblockN{Xun~Gong,~\IEEEmembership{Student~Member,~IEEE,}
        and~Negar~Kiyavash,~\IEEEmembership{Senior Member,~IEEE}}\\
 \thanks{This work was supported in part by National Science Foundation through the grant CCF 10-65022, CCF 10-54937 CAR, and in part by Air Force through the grant FA9550-11-1-0016, FA9550-10-1-0573.
 
 X. Gong is with the Coordinated Science Laboratory and the Department of Electrical Engineering, University of Illinois at Urbana-Champaign, Urbana, IL 61801 USA (email: \url{xungong1@illinois.edu})
 N. Kiyavash is with  the Coordinated Science Laboratory  and the Department of Industrial and Enterprise Systems Engineering, University of Illinois at Urbana-Champaign, Urbana, IL 61801 USA (email: \url{kiyavash@illinois.edu})
 }   
}

\maketitle

\begin{abstract}
When multiple job processes are served by  a single scheduler, 
the queueing delays of one process are often affected by the others,
resulting in  a timing side channel that leaks the arrival pattern of  one process to the others. 
In this work, we study such a timing side channel between 
a regular user and a malicious attacker. 
Utilizing Shannon's mutual information as a measure of information leakage between
the user and attacker, we analyze privacy-preserving behaviors of common work-conserving schedulers.
We find that the attacker can always learn perfectly the user's arrival process in a longest-queue-first (LQF) scheduler.
When the user's job arrival rate is very low (near zero),  first-come-first-serve (FCFS) and round robin schedulers both completely reveal the user's arrival pattern. 
The near-complete information leakage  in the low-rate traffic region is proven to be reduced by half in a work-conserving version of TDMA (WC-TDMA) scheduler, 
which turns out to be privacy-optimal in the class of deterministic-working-conserving (det-WC) schedulers, according to a universal lower bound on information leakage we derive for all det-WC schedulers. 
\end{abstract}

\IEEEpeerreviewmaketitle

\section{Introduction}
\PARstart{I}t has long been known that event times could be used for covert communication~\cite{lampson73:acm}. 
For instance, by encoding messages in transmission  times of events, 
 an event scheduler can create
 a {\em timing covert channel} to 
 any observer that sees the time events occur. 
 Some notable timing covert channels include the CPU scheduling channel~\cite{millen89}, in which 
one process encodes a message into sizes of the jobs it hands to a CPU shared with another process which decodes this information through monitoring CPU's busy period, and the IP timing channel~\cite{Cabuk2009}, in which messages are embedded in the inter-arrival-times of packets.

More recently,  it has  been shown that event times also incidentally leak information, resulting in  {\em timing side channels}.  Unlike a covert channel,  there is no active message sender in a side channel. Instead, an attacker infers information about the other users  from the timing evidence left on a shared resource. 
Such a timing side channel exists between two users sending jobs to the same queue.
The queuing delays of one user's jobs convey information about the activities of the other.  
Timing side channels have been previously exploited to
learn the activities of cloud clients and home broadband customers~\cite{ristenpart09:ccs, gong2012pets}. 
In cloud computing infrastructures, such as  Amazon  Elastic Compute Cloud (EC2),   a server often hosts jobs from multiple clients. This provides a malicious client with the opportunity to probe workloads of his cloud neighbors~\cite{ristenpart09:ccs}.
Likewise, a timing side channel can be built  in the home digital subscriber line (DSL) router. 
An attacker pinging the DSL user may learn the user's web traffic pattern  because the pings and the user's packets share the downstream queue at the DSL router~\cite{gong2012pets}.

In this paper, we study a  timing side channel that arises when two users share a joint event queue.  One user is assumed to be a malicious attacker 
 who wants to learn the other user's job arrival pattern based on  the delays his jobs experience.
The amount of coupling between the user and attacker's jobs largely depends on
the scheduling policy of the job server. 
A scheduler certainly can  eliminate  the side channel by applying  a time division multiple access (TDMA) policy,
which decouples 
 service to the users but adds unnecessary delays.
 On the other hand, in  work-conserving schedulers, which  achieves delay optimality by keeping busy as long as the queue is not empty,
 timing side channels are inevitable. 
 Kadloor et al.\ \cite{kadloor13} characterized the information leakage of work-conserving schedulers for an attacker that could issue
infinitesimally small jobs. 
This raises the question: {\em Could side channel information leakage be alleviated if 
the attacker is not allowed to issue jobs with arbitrarily small sizes?} 
In fact,  in many real systems, there are requirements on acceptable job sizes. 
For instance, the limit on 
  network packet sizes  often pre-fixes~\cite{peterson2007computer}.
We answer this question by considering a  scenario, where users are required to send jobs of comparable sizes. 
Additionally, we measure the leakage of a scheduler in terms of  performance of the best attacker who aims to learn the exact arrival times of the user's jobs.
This is a departure from~\cite{kadloor13}, where the attacker's goal was to learn the counts of the user's jobs in each clock period. The current metric  captures loss in privacy of the user more accurately. 
 Our main contributions of this work are summarized in the following:
\begin{itemize}
\item We develop an information-theoretic framework to analyze  timing side channels in job schedulers. 
Considering a scheduler serving a user and an attacker, we measure the information leakage  using Shannon's mutual information between the user's job arrival process and the attacker's job arrival and departure processes. 

\item We demonstrate that most commonly deployed work-conserving scheduling policies are not privacy optimal: the longest-queue-first~(LQF) scheduler leaks the user's arrival pattern completely;
when the user's job arrival rate is near zero, both first-come-first-serve (FCFS) and round robin schedulers completely reveal the user's arrival pattern, while a work-conserving TDMA-like scheduler leaks the user's arrival process  half of the time. 

\item We derive a lower-bound on information leakage for  all deterministic work-conserving (det-WC) schedulers, where the server takes deterministic actions and stays busy as long as  there are jobs to serve. The lower bound shows that  in the low-rate traffic region, the attacker learns the user's arrival pattern for at least half of the time.
The implication of this study is that deploying det-WC schedulers in applications, in which privacy is a concern, is a poor choice.  
\end{itemize}

\section{Related Work}

Traditionally, timing channels are for the most part studied in the context of covert communication.
Most of the literature focuses on the capacity of such channels.
Anantharam and Verd\'{u}~\cite{AnantharamVerdu96} studied the timing channel 
between the arrival and departure process of a single user $\cdot/G/1$ queue, and showed 
that capacity is minimized when  the service times of jobs are exponentially distributed. 
For such a queue, bounds on the capacity for Bounded Service Timing Channels (BSTC), in which the service time distributions have bounded support, were derived in~\cite{sellke2007capacity}.
Riedl et. al\ \cite{riedl2011finite} considered the usage of the same channel with finite-length codewords, and obtained a lower bound on the maximal rate achievable. 
A covert channel between two job processes sharing a round robin scheduler was studied in~\cite{millen89}.  Assuming all jobs have the same size, it was proved that the channel capacity  is
$\log\left(\frac{1+\sqrt{5}}{2}\right)$ bits per time slot.
Strategies for mitigating timing covert channels were studied in~\cite{Siva, Giles02, askarov10:ccs, zhang11:ccs}. 
The main proposed countermeasure  idea is 
to weaken the correlation between event times seen by the sender and receiver 
via injecting `dummy' delays. 

On the application side, 
timing side channels were  exploited  in network traffic analysis to compromise user anonymity. In~\cite{Murdoch2005},  round-trip times~(RTTs) of probe packets sent to routers were measured to estimate available bandwidths at the router, which were subsequently used to expose the identity of relays participating in a circuit of  the anonymous communication networks, such as  Tor~\cite{Dingledine2004} or MorphMix~\cite{morphmix}.
In~\cite{kiyavash2013timing},
Kiyavash et al.\ designed and implemented a spyware communication circuit in the widely used carrier sense multiple access with collision avoidance (CSMA/CA) protocol, using the timing channel  resulting from transmission of packets.
In~\cite{gong2012pets} and~\cite{Sachin10},  it was shown that an attacker can create a timing side channel inside a DSL router using frequent pings, and recover DSL user's traffic pattern from monitored  RTTs. 
Queuing side channels in shared queues were analyzed in~\cite{kadloor13} and~\cite{gong2011isit},
where the information leakage was measured by minimum-mean-square-error and equivocation, respectively. 
With the goal of measuring the number of jobs  from the user in a clock period, it was shown in both~\cite{kadloor13} and~\cite{gong2011isit} that an FCFS scheduler completely leaked the user's traffic pattern if the attacker could send at least one job in every clock period.  Additionally, assuming the attacker was able to issue jobs with infinitesimal  sizes,  it was proven in~\cite{kadloor13} that 
round robin is privacy optimal among all work-conserving schedulers; yet it leaks substantial  information about the user. 

%In this work, we investigate the leakage of common schedulers in terms of  exact arrival times of the user, which is a more accurate pattern pertaining to the user's activities. Moreover, we consider the case that  the attacker must send jobs of a comparable size to the user (e.g., due to the same network traffic protocol).
%Compared to~\cite{kadloor12infocom}, our scenario is much more realistic and less favorable for the attacker. However, we show 
 %that deterministic and working conserving schedulers still leak a significant amount of information, especially in the low-rate traffic region. 

\section{Problem Formulation}

In this section, we introduce the notation and system model. Throughout this section: bold script $\mathbf{A}$ denotes the infinite sequence  $\{A_1, A_2, \cdots\}$,  $\mathbf{A}^n$ denotes the finite sequence  $\{A_1, A_2, \cdots, A_n\}$, and $\mathbf{A}^{j}_{i}$ denotes the subsequence $\{A_{i}, A_{i+1}, \cdots, A_{j}\}$, where $j\geq i$. 
\subsection{System Model}
\label{sec:sys_mod}
We consider the timing side channel in a scheduler processing jobs from a regular user and a malicious attacker in discrete time, as depicted by~Figure~\ref{fig:sys_mod}. 
In each time slot, the user (and the attacker) either issues one job or remains idle. All jobs have the same size and take one time slot to service. The user sends jobs according to a Bernoulli process with rate $\lambda$.
The attacker, who wants to infer the user's arrival times, picks time slots according to his attack strategy and sends jobs with the long term rate  $\omega$,
not exceeding $1-\lambda$, in order to preserve the queue's stability. 
We assume all arrival and departure events occur at the beginning of time slots.

 \begin{figure}[t]
   \centering
   \includegraphics[width=0.6\columnwidth]{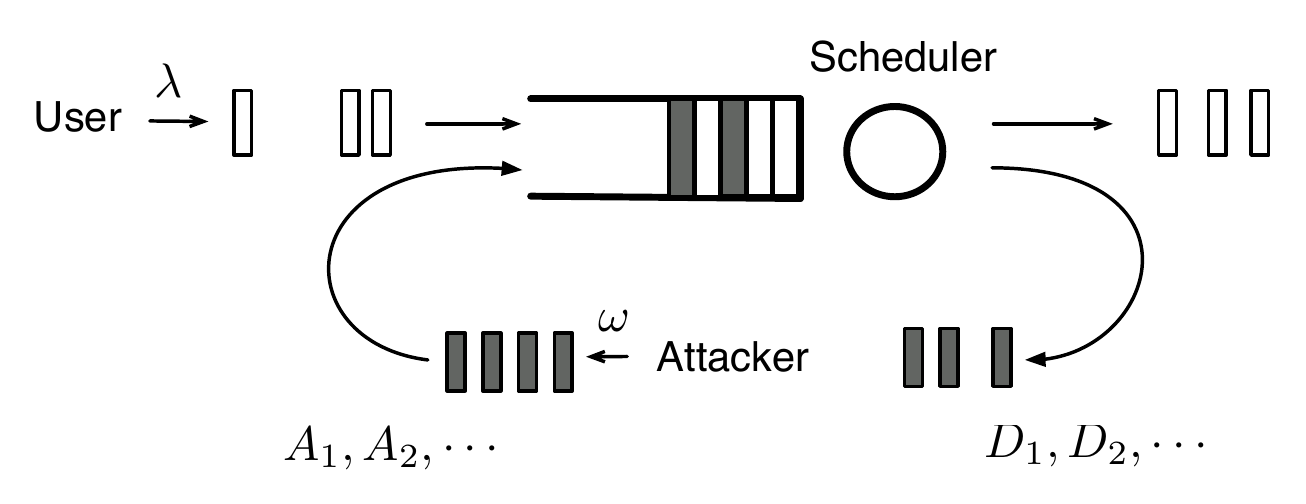} 
   \caption{
  A scheduler services jobs from two arrival processes; one  from a malicious attacker
(solid) and one from a regular user (blank). The attacker sends
jobs to the scheduler to sample the queue status, based on which he learns about the arrival pattern of the other user.}
   \label{fig:sys_mod}
\end{figure}

\subsection{Information Leakage Metric}

We measure  information leakage in this timing side channel by  Shannon's mutual information between the user's arrival process and the attacker's observations, comprised of  his arrival and departure times. 
Similar metrics, e.g., Shannon's equivocation,  have been frequently used for quantifying information leakage in communication systems, such as the wiretap channel~\cite{Wyner75}.  Denote the arrival event sequence of the user's jobs in each time slot  by $\boldsymbol{\delta}=\{\delta_1,\delta_2,\cdots\}$, where $\delta_i\sim Bernoulli(\lambda), i \in \mathbb{Z}$, and denote the arrival and departure times of the attacker's jobs by $\mathbf{A}=\{A_1,A_2,\cdots\}$ and $\mathbf{D}=\{D_1,D_2,\cdots\}$, respectively. 
%The mutual information between the user's arrival process and the attacker's observations over a period of $n$ time slots is $I\left(\boldsymbol{\delta}^{n}; \mathbf{A}^m,\mathbf{D}^m\right)$.
\begin{definition}
The information leakage of a timing side channel in a queue shared by a user and an attacker is defined as
\begin{equation}
\begin{aligned}
\mathcal{L}{(\lambda)}&=\underset{\mathbf{A}:\underset{k\to\infty}{\lim}\frac{k}{A_k}<1-\lambda}{\max}\text{ }
\underset{n\to\infty}{\lim}\frac{I\left(\boldsymbol{\delta}^{n}; \mathbf{A}^{m},\mathbf{D}^{m}\right)}{n},
\end{aligned}\label{eq:privacy}
\end{equation}
where $I\left(\cdot; \cdot\right)$ denotes Shannon's mutual information,  $\lambda$ is the user's arrival rate, 
and $m$ is the number of jobs the attacker has issued by time $n$:
\begin{equation}m=\underset{}{\sup}\{k:A_k\leq n\}.
\label{eq:m}\end{equation}
\label{def:privacy}
\end{definition}

The  leakage $\mathcal{L}$ characterizes  the information gain of the attacker deploying the best possible attack strategy satisfying the rate restriction. 
Hence, a larger leakage $\mathcal{L}$ signifies a larger compromise in the user's privacy
 through the timing side channel. 
 Let $H(\lambda)$ denote the entropy rate of the user's arrival process, which is assumed to be Bernoulli. 
\begin{definition}
The information leakage ratio  of a timing side channel in a queue shared by a user and an attacker  is defined as
\begin{equation}
\begin{aligned}
\mathcal{R}(\lambda)&=\underset{\mathbf{A}:\underset{k\to\infty}{\lim}\frac{k}{A_k}<1-\lambda}{\max}\text{ }
\underset{n\to\infty}{\lim}\frac{I\left(\boldsymbol{\delta}^{n}; \mathbf{A}^{m},\mathbf{D}^{m}\right)}{n H\left(\lambda\right)}.
\end{aligned}\label{eq:privacy_ratio}
\end{equation}
%where $\lambda$ is the user's arrival rate, 
%and $m$ is the number of jobs the attacker issued by time $n$: $m=\underset{}{\sup}\{k:A_k\leq n\}$.
\label{def:privacy}
\end{definition}

The information leakage ratio $\mathcal{R}$ is a better metric for comparing the leakage across users with various rates.  
The value of $\mathcal{L}$ (and $\mathcal{R}$) clearly depends on the scheduling policy. 
For instance,  for the TDMA policy,  in which fixed time slots are preassigned for serving each arrival process, both $\mathcal{L}$ and $\mathcal{R}$ are zero. This is because service times of the attacker's jobs are statistically independent of the user's arrival pattern.  
Unfortunately, TDMA  is wasteful  and adds significant delays by  causing the scheduler to idle.  
Therefore, such complete isolation of users' job processes is often not desired in practice.
In this work, we analyze the information leakage of timing channels in work-conserving (delay-optimal) schedulers and  
investigate whether good policies that are simultaneously privacy and delay optimal exist. 

% as the scheduler is required to maintain a certain level of quality of service (QoS), such as average job delay. In such cases, a timing side channel is inevitable.

\section{Information Leakage in  Deterministic Work-Conserving Schedulers}

In this section, we characterize or derive bounds on the leakage in the class of  deterministic-work-conserving (det-WC) schedulers. These schedulers service jobs in a deterministic fashion and do not idle as long as there are jobs in the queue. 
Our main results are summarized in Figure~\ref{fig:it_summary}.

We show that even when the attacker is required to send jobs of a comparable size to the user, all det-WC schedulers leak at least half of the user's traffic pattern in the low-rate region. This is proved by deriving a universal lower bound for all det-WC schedulers as depicted in Figure~\ref{fig:it_summary}. 

The attacker learns  completely the arrival process of the user in an LQF scheduler by simply maintaining his own queue length  at one (the flat solid line at the top in  Figure~\ref{fig:it_summary}). In an FCFS scheduler, instead of the exact arrival times, the attacker can infer the number of jobs (arrivals) of the user between  any of his two consecutive  jobs by sampling the queue at his maximum rate. This leads to a severe leakage in the arrival process of a user with low arrival rate (the blue solid curve in  Figure~\ref{fig:it_summary}), as the attacker can sample the queue frequently enough to obtain an accurate estimate of the user's arrival event in each time slot. 
We derive a lower bound on the privacy leakage for round robin (the green dashed curve in Figure~\ref{fig:it_summary}), which can be achieved by an attacker who issues a new job right after his previous job is serviced. This attack strategy allows the attacker to detect when the user's queue gets empty.
As depicted in  Figure~\ref{fig:it_summary},  it  provides sufficient information about the user's arrival pattern at the low-rate region, resulting in a complete information leakage when the user's arrival rate $\lambda\to0$.

 \begin{figure}[t]
   \centering
   \includegraphics[width=0.75\columnwidth]{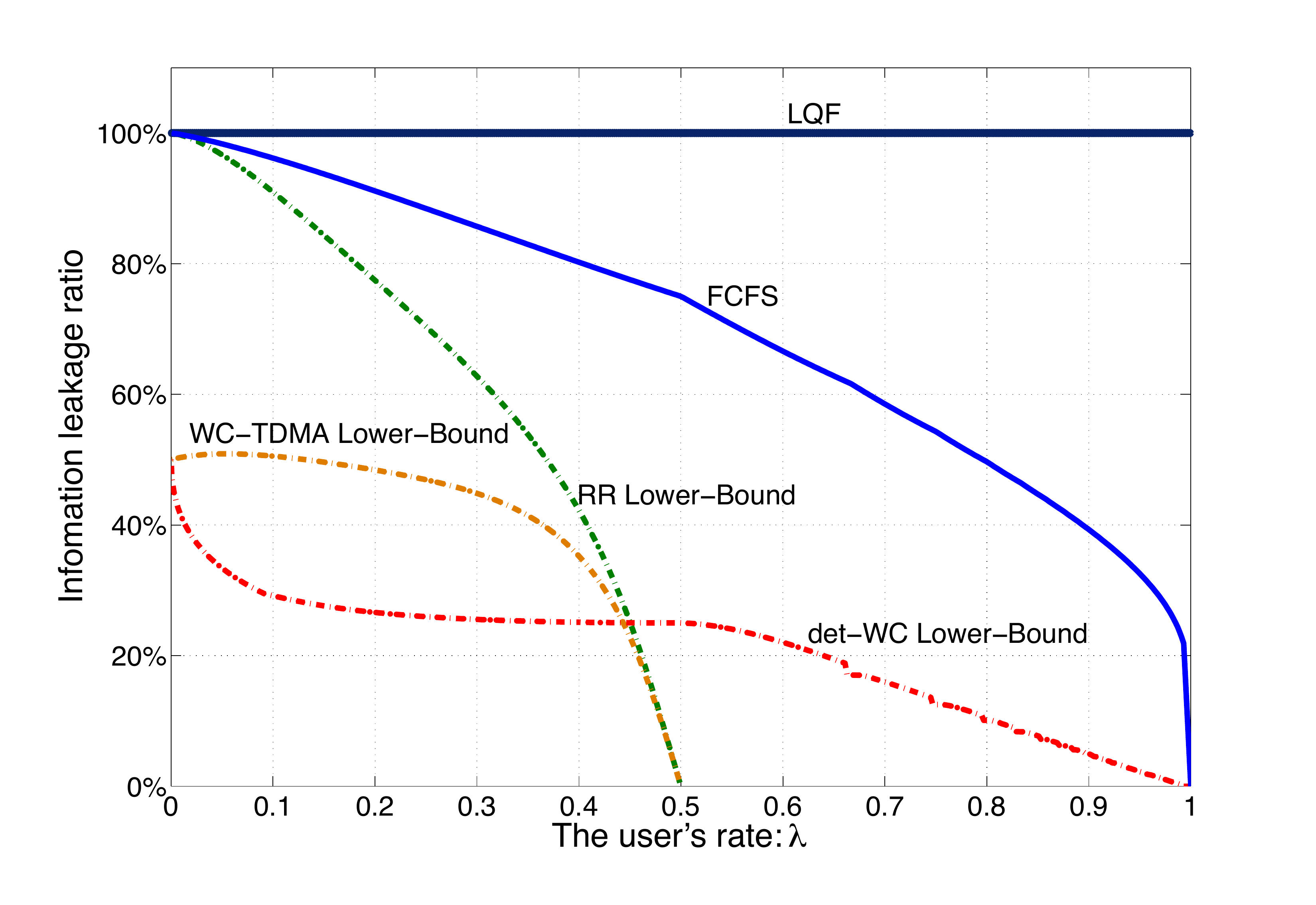} 
   \caption{
Information leakage ratios in deterministic-work-conserving schedulers. In LQF,  the user's arrival process is completely leaked to the attacker, which also occurs in the FCFS and round robin scheduler when the user's rate is very low. A work-conserving TDMA scheduler reduces the fraction of leaked information in the low-rate region by half; the lower-bound on the leakage of WC-TDMA is tight at $\lambda\to 0$.}  
 \label{fig:it_summary}
\end{figure}  

The near-complete information leakage  in the low-rate region is alleviated by the work-conserving TDMA (WC-TDMA) scheduler. 
Like TDMA, the WC-TDMA scheduler reserves slots for each arrival process; e.g., odd slots for the user and even ones for the attacker.
However, in each time slot, if the preassigned 
user has no jobs, the scheduler serves the other user with jobs waiting for service.
Such work-conserving behavior enables the attacker to correctly detect  arrivals on time slots reserved by the user.
We derive a lower bound on the leakage of a WC-TDMA scheduler, which is proved to be tight when the user's rate $\lambda\to0$ (the orange dashed curve in Figure~\ref{fig:it_summary}). 
This means the attacker can learn  the  arrival pattern of a low-rate user perfectly for half of the time, and further implies that for $\lambda\to0$, WC-TDMA is a privacy optimal policy in the det-WC class 
as it meets  the det-WC universal lower-bound.

\subsection{Longest-Queue-First}
\label{sec:lqf}
We first analyze the leakage of an LQF
 scheduler, which we can exactly characterize.  In each time slot, the LQF scheduler services the first buffered job from the user that has more jobs queued up so far. In the case of a tie, the scheduler serves a predetermined user first. 

Since  the LQF scheduler takes actions by comparing queue lengths of users, a smart attacker can accurately 
learn the change in the user's queue state by maintaining 
 his queue length constantly at one.
 Assuming the user has priority of service at a tie, such an attacker always knows whenever the user's queue size passes 0 and detects every job sent by the user, as further explained below. 
  % then he  knows whenever the user's queue length passes one, a signal for new arrival or idling of the user. 

\begin{theorem}
The information leakage of an LQF scheduler serving a user and an attacker is  given by
\begin{equation}
\begin{aligned}
\mathcal{L}_{LQF}(\lambda)= H(\lambda),
\end{aligned}
\label{eq:lqf_main}
\end{equation}
where  $\lambda$ is the user's arrival rate. Consequently, 
$\mathcal{R}_{LQF} (\lambda)=1$, for all $0<\lambda<1$.
\label{eq:lqf_main_1}
\label{theo:lqf_main}
\end{theorem}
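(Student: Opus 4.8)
\emph{Proof proposal.} The upper bound $\mathcal{L}_{LQF}(\lambda)\le H(\lambda)$ is immediate and in fact holds for every scheduler: since the $\delta_i$ are i.i.d.\ $Bernoulli(\lambda)$ we have $I(\boldsymbol{\delta}^n;\mathbf{A}^m,\mathbf{D}^m)\le H(\boldsymbol{\delta}^n)=nH(\lambda)$ for every attack strategy, so dividing by $n$ and passing to the limit gives the claim. The substance of the theorem is the matching lower bound, and the plan is to exhibit a single attack under which the observations $(\mathbf{A},\mathbf{D})$ determine $\boldsymbol{\delta}$ up to a vanishing fraction of slots. Concretely, let the attacker keep his backlog equal to one at all times: issue a job in slot $1$, and issue a fresh job in the slot immediately following each departure of his own job. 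Under LQF with ties resolved in favor of the user, whenever the user's queue is nonempty it is at least as long as the attacker's (which equals one), so the user is served; hence, writing $Q_t$ for the user's backlog just after the slot-$t$ arrival, $Q_t=(Q_{t-1}-1)^+ + \delta_t$, and since $Q_t\in\{0,1\}$ by induction one gets $Q_t=\delta_t$ for every $t$. Consequently the attacker's job departs in slot $t$ if and only if $\delta_t=0$, so the departure set equals $\{t:\delta_t=0\}$, and $\mathbf{A}$ is in turn a deterministic function of $\mathbf{D}$ ($A_1=1$, $A_{k+1}=D_k+1$). Thus $(\mathbf{A}^m,\mathbf{D}^m)$ pins down $\delta_1,\dots,\delta_{n'}$ for an $n'$ that trails $n$ by at most the length of the in-progress user busy period; since $\lambda<1$ this length has an $O(1)$ mean, so $H(\boldsymbol{\delta}^n\mid\mathbf{A}^m,\mathbf{D}^m)=O(1)$ and $I(\boldsymbol{\delta}^n;\mathbf{A}^m,\mathbf{D}^m)/n\to H(\lambda)$.

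The one wrinkle is that this attack operates at rate $\lim_k k/A_k = 1-\lambda$ exactly, on the boundary of the admissible set in~\eqref{eq:privacy} rather than strictly inside it. I would fix this by thinning: after every $M$-th departure, let the attacker idle one extra slot before re-issuing. A short conservation-of-work count then shows the resulting rate is strictly below $1-\lambda$ (and tends to $1-\lambda$ as $M\to\infty$), while each inserted idle slot costs the attacker knowledge of at most $O(1)$ of the $\delta_t$'s, so the thinned attack still achieves $I(\boldsymbol{\delta}^n;\mathbf{A}^m,\mathbf{D}^m)/n \to H(\lambda)-O(1/M)$. Letting $M\to\infty$ yields $\mathcal{L}_{LQF}(\lambda)\ge H(\lambda)$, hence $\mathcal{L}_{LQF}(\lambda)=H(\lambda)$; dividing by $H(\lambda)$ gives $\mathcal{R}_{LQF}(\lambda)=1$ for all $0<\lambda<1$.

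The step I expect to require the most care is making the claim ``$(\mathbf{A}^m,\mathbf{D}^m)$ determines $\boldsymbol{\delta}^{n'}$'' rigorous: one must use stability ($\lambda<1$) to guarantee that the user's queue empties infinitely often, so that $m\to\infty$ and the covered horizon $n'$ grows like $n$, and then bound the conditional entropy $H(\boldsymbol{\delta}^n\mid\mathbf{A}^m,\mathbf{D}^m)$ — contributed only by the transient and the trailing in-progress busy period — uniformly in $n$ so that it is $o(n)$. The thinning bookkeeping in the second step is routine by comparison, and everything else (the upper bound, the identity $Q_t=\delta_t$, and the functional dependence of $\mathbf{A}$ on $\mathbf{D}$) is elementary.
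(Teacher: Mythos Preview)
Your approach is essentially the same as the paper's: both deploy the \emph{nonstop-monitoring} attack (keep the attacker's backlog at one) and observe that under LQF the attacker's departure times determine the user's arrival pattern, so that $H(\boldsymbol{\delta}^n\mid\mathbf{D}^m)$ vanishes in rate. The paper's version is terser---it writes $A_k=D_{k-1}$ rather than your $A_{k+1}=D_k+1$, and it simply asserts $H(\boldsymbol{\delta}^n\mid\mathbf{D}^m)=0$ without discussing either the trailing busy period or the fact that the attack rate sits exactly at $1-\lambda$; your thinning fix and your $O(1)$ bound on the residual conditional entropy are correct refinements of points the paper leaves implicit.
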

\begin{proof}
Consider a {\em nonstop monitoring} attack strategy (Figure~\ref{fig:nonstop_attack}), where the attacker issues a new job immediately after his previous is serviced.
Recall in our model, all arrival and departure events happen at the beginning of time slots. 
Thus, in the nonstop monitoring attack, we have
%Let $\hat{\mathbf{A}}$ and $\hat{\mathbf{D}}$ denote the attacker's arrival and departure times in this strategy, respectively.
\begin{equation}
\begin{aligned}
{A}_k={D}_{k-1}, \quad k\in\mathbb{Z}.
\label{eq:nonstop}
\end{aligned}
\end{equation} 

Such an attacker always has a single job in the queue. 
Assume the user gets served first when a tie  happens.
Then, the scheduler never serves the attacker unless  the user has no job left. 
As a result, whenever  the user issues a new job,  the attacker experiences a time slot of delay, i.e.,
\begin{equation}
\begin{aligned}
\delta_i=\begin{cases} 0 & \mbox{if  } \exists  k\in\mathbb{Z}, \mbox{ s.t. }   {D}_k = i+1   \\ 1 & \text{otherwise}  \end{cases},
\quad i \in\mathbb{Z}.
\end{aligned}
\label{eq:lqf_q}
\end{equation}
Similarly, if the attacker has priority of service when there is a tie in queue lengths, he would get served if and only if the user's queue length
falls below 2, in which case \eqref{eq:lqf_q} also holds.

\begin{figure}[t]
   \centering
   \includegraphics[width=0.6\columnwidth]{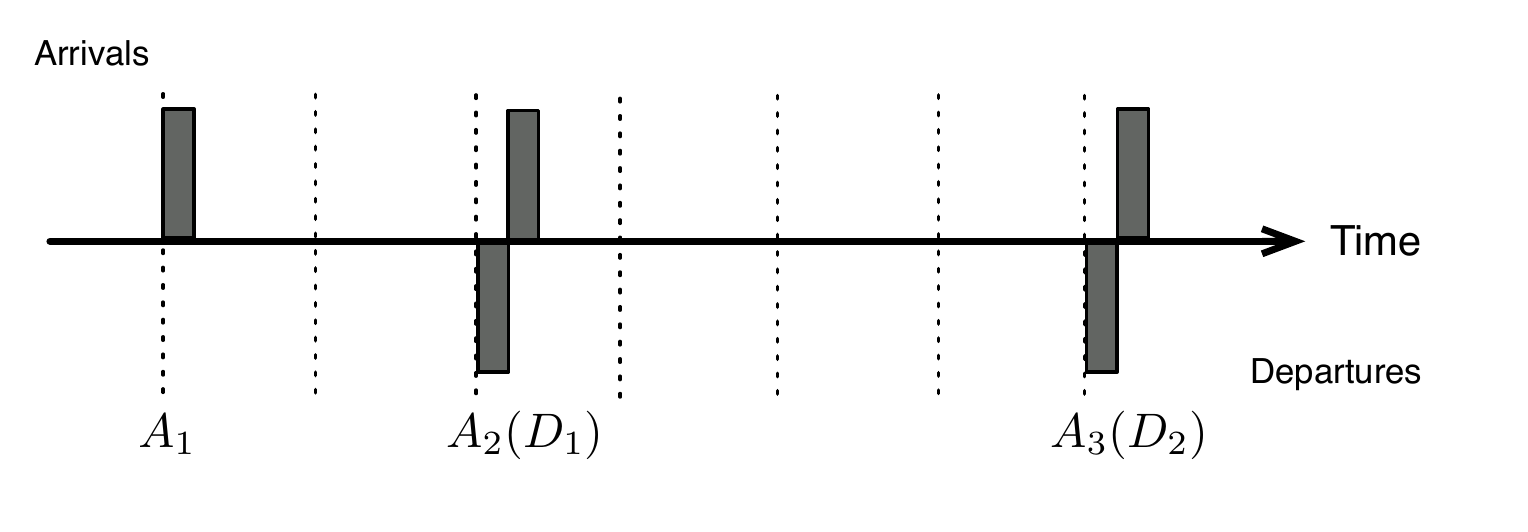}
   \caption{Nonstop monitoring: the attacker issues a new job right after a previous job departs from the queue. }
   \label{fig:nonstop_attack}
\end{figure}

Therefore,  we can obtain a lower bound on information leakage which results from the nonstop-monitoring attack  as  follows:
\begin{equation}
\begin{aligned}
\mathcal{L}_{LQF} (\lambda) &\geq
\underset{n\to\infty}{\lim}\frac{I\left(\boldsymbol{\delta}^{n}; {\mathbf{A}}^{{m}},{\mathbf{D}}^{{m}}\right)}{n}\\
&\overset{(a)}{=}  \underset{n\to\infty}{\lim}\frac{I\left(\boldsymbol{\delta}^n;{\mathbf{D}}^{{{m}}}\right)}{n}\\
&= H(\lambda)-\underset{n\to\infty}{\lim}\frac{H\left(\boldsymbol{\delta}^{n}\big| {\mathbf{D}}^{{{m}}}\right)}{n}\\
&\overset{(b)}{=}  H(\lambda).
\end{aligned}
\label{eq:lqf_lb}
\end{equation}
where   $(a)$ and $(b)$ follow from~\eqref{eq:nonstop} and~\eqref{eq:lqf_q}, respectively.  
Additionally, since the leakage is always upper-bounded by the total entropy rate of the user's arrival process, $H(\lambda)$,  we have $\mathcal{L}_{LQF}(\lambda)=H(\lambda)$.
\end{proof}

LQF is a low-complexity approximation of the MaxWeight scheduling, a  throughput-optimal algorithm applied in network switches~\cite{shakkottai02}. It requires less buffer storage than other common scheduling algorithms,  such as FCFS and round robin~\cite{gail93}. However, as seen in Theorem~\ref{theo:lqf_main}, LQF fully exposes arrival pattern of the user to an attacker.

\subsection{First-Come-First-Serve}
\label{subsec:fcfs}

We subsequently analyze the leakage of FCFS, a simple service policy widely applied in network systems. 
At each time slot, the FCFS scheduler services the job at the head of the queue.\footnote{For the sake of convenience, we assume that when both the user and the attacker issue a job in one time slot, the attacker's job enters the queue first.}
FCFS reveals the queue length $q(\cdot)$ of the buffer to an attacker through queueing delays of his jobs because
\begin{equation}
\begin{aligned}
q\left(A_k\right)=D_k-A_k-1,  \quad k\in\mathbb{Z},
\label{eq:queue_service}
\end{aligned}
\end{equation}
where `1' accounts for the service time of the $k^{th}$ attacker's job. 
As a result, the attacker can  frequently sample the state of the buffer queue and then estimate the number of the user's 
arrivals.

Let $N(t)$ denote the counting function associated with the user's arrivals at time $t$, then
\begin{equation}
N(t)=\sum_{j=1}^{t}\delta_j, \quad t\in \mathbb{Z}.
\label{eq:count_func}
\end{equation}
The following theorem on optimal sampling of a Bernoulli processes is necessary for proving our main result on information leakage of the FCFS scheduler. 
For the ease of notation, we define in the following
\begin{equation}
 \alpha_\epsilon\overset{\bigtriangleup}{=}\frac{\left\lceil \frac{1}{\epsilon} \right\rceil-\frac{1}{\epsilon}}{\left\lceil \frac{1}{\epsilon} \right\rceil-\left\lfloor \frac{1}{\epsilon} \right\rfloor}, \quad \forall \text{ }0<\epsilon<1.
 \label{eq:alpha}
\end{equation}

\begin{figure}[t]
   \centering
   \includegraphics[width=0.6\columnwidth]{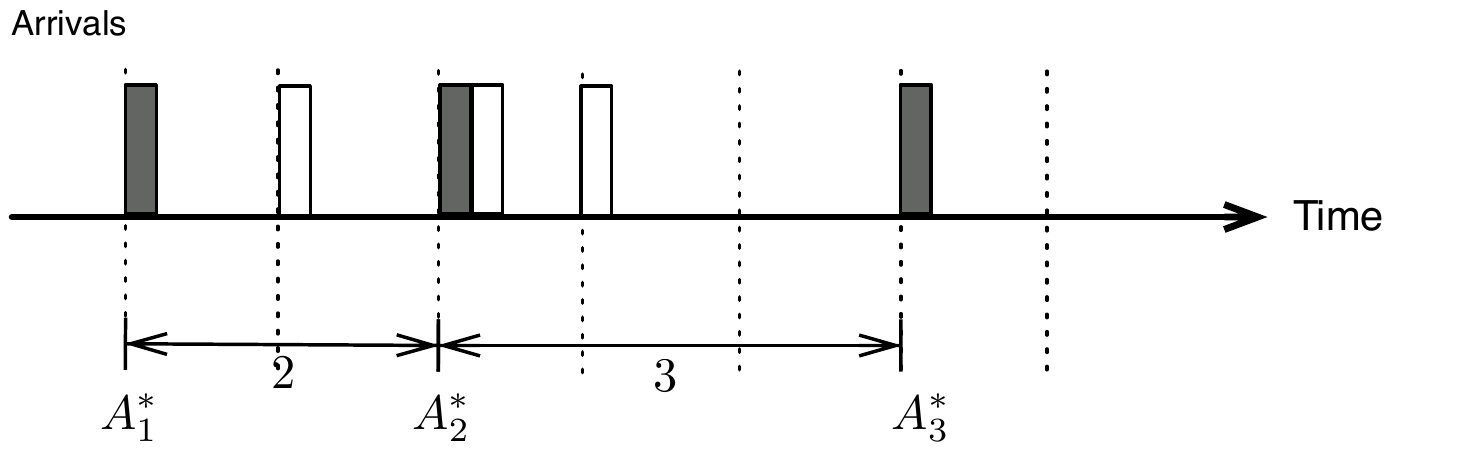} 
   \caption{{Periodic sampling:} given a sampling rate $\omega$, the attacker issues jobs (solid) periodically, with inter-arrival times chosen from $\left\lfloor \frac{1}{\omega}\right\rfloor$ and $\left\lceil \frac{1}{\omega}\right\rceil$. For example, if $\omega=0.4$, the inter-arrival time ${A}^{*}_i-{A}^{*}_{i-1}$ would take value of 2 and 3 with equal probability.}
   \label{fig:attack_strategy}
\end{figure}

\begin{theorem}
Consider sampling a Bernoulli arrival process $\boldsymbol{\delta}=\{\delta_1,\delta_2, \cdots\}$ at times $\mathbf{A}=\{A_1,A_2,\cdots\}$.
For a fixed sampling rate $\omega=\underset{k\to\infty}{\lim}\frac{k}{A_k}$, the following periodic sampling strategy (Figure~\ref{fig:attack_strategy}) is optimal:
\begin{equation}
\begin{aligned}
A^{*}_{k}-A^{*}_{k-1}=\begin{cases} \left\lfloor\frac{1}{\omega}\right\rfloor & \mbox{w.p. } \alpha_\omega \\ \left\lceil \frac{1}{\omega} \right\rceil & \mbox{w.p. } 1-\alpha_{\omega}  \end{cases}, \quad k\in \mathbb{Z}.
\end{aligned}
\label{eq:attack_strategy_1}
\end{equation} 
The optimality is defined in the sense of maximizing  the entropy rate of the sampled process.
This optimal value is given by
\begin{equation}
\underset{k\to\infty}{\lim}\frac{H\left(N\left(A^{*}_1\right),\cdots,N\left(A^{*}_k\right)\right)}{k}= \alpha_\omega  H\left(\sum_{i=1}^{\left\lfloor \frac{1}{\omega} \right\rfloor}\delta_i\right)  +(1-\alpha_\omega) H\left(\sum_{i=1}^{\left\lceil \frac{1}{\omega} \right\rceil}\delta_i\right).
\label{eq:periodic_sampling_main}
\end{equation}
\label{theorem:periodic_sampling}
\end{theorem}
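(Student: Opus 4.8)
The plan is to reduce the entropy rate of the sampled count sequence to a long-run average of binomial entropies, then show that this binomial entropy is concave in its number of trials, and finally extract the optimal schedule from a concave-envelope / Jensen argument.

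\emph{Step 1: pass to increments.} Set $A_0:=0$ and $X_j:=N(A_j)-N(A_{j-1})$. Since $(N(A_1),\dots,N(A_k))$ is the vector of partial sums of $(X_1,\dots,X_k)$, the two carry the same entropy: $H(N(A_1),\dots,N(A_k))=H(X_1,\dots,X_k)$. For a deterministic sampling schedule the gaps $T_j:=A_j-A_{j-1}$ are fixed, the $X_j$ are independent (sums of i.i.d.\ $\delta_i$'s over disjoint blocks), each $X_j\sim\mathrm{Bin}(T_j,\lambda)$, so $H(X_1,\dots,X_k)=\sum_{j=1}^k f(T_j)$ with $f(t):=H\big(\sum_{i=1}^t\delta_i\big)$. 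For an adaptive schedule the same holds step by step: conditioned on $X^{j-1}=x^{j-1}$ the gap $T_j$ is determined and the $\delta_i$'s in the $j$-th block are fresh, giving $H(X_j\mid X^{j-1})=\mathbb{E}[f(T_j)]$ and $H(X^k)=\mathbb{E}\big[\sum_{j=1}^k f(T_j)\big]$ (for randomized schedules I would additionally condition on the attacker's own coins, which cannot increase what it learns about $\boldsymbol{\delta}$). The rate constraint $\lim_k k/A_k=\omega$ becomes $\tfrac1k\sum_{j\le k}T_j\to 1/\omega$.

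\emph{Step 2: concavity of $f$.} The key lemma is that $f$ is concave on $\mathbb{Z}_{\ge0}$, i.e.\ $f(n+1)-f(n)$ is nonincreasing in $n$. Writing $S_m=\sum_{i=1}^m\delta_i$: since $(S_n,S_{n+1})$ and $(S_n,\delta_{n+1})$ determine each other and $\delta_{n+1}\perp S_n$, we get $H(S_n,S_{n+1})=H(S_n)+H(\lambda)=H(S_{n+1})+H(S_n\mid S_{n+1})$, hence $f(n+1)-f(n)=H(\lambda)-H(S_n\mid S_{n+1})$. By exchangeability of the i.i.d.\ trials, given $S_{n+1}=s$ one has $S_n=s-\mathrm{Bernoulli}(s/(n+1))$, so $H(S_n\mid S_{n+1})=\mathbb{E}\big[H\big(S_{n+1}/(n+1)\big)\big]$ with $H(\cdot)$ the binary entropy. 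It thus suffices to prove that $g(m):=\mathbb{E}[H(S_m/m)]$ is nondecreasing. For this I would use that $(S_m/m)_{m\ge1}$ is a reverse martingale for the decreasing filtration $\sigma(S_m,S_{m+1},\dots)$ --- indeed $\mathbb{E}[S_m\mid S_{m+1}]=\tfrac{m}{m+1}S_{m+1}$ by symmetry --- so conditional Jensen for the concave $H(\cdot)$ gives $\mathbb{E}[H(S_m/m)\mid\sigma(S_{m+1},\dots)]\le H(S_{m+1}/(m+1))$, and taking expectations yields $g(m)\le g(m+1)$. Therefore $f(n+1)-f(n)=H(\lambda)-g(n+1)$ is nonincreasing (and nonnegative, since $g(n+1)\le H(\lambda)$ by Jensen again), so $f$ is concave and nondecreasing on $\mathbb{Z}_{\ge0}$.

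\emph{Step 3: optimize.} Let $\hat f$ be the piecewise-linear interpolant of $f$ on the integers; by concavity of $f$ this is the concave envelope of $\{(t,f(t)):t\ge1\}$, and the weights $\alpha_\omega$ in \eqref{eq:alpha} are exactly those placing $1/\omega$ between $\lfloor1/\omega\rfloor$ and $\lceil1/\omega\rceil$, so $\hat f(1/\omega)=\alpha_\omega f(\lfloor1/\omega\rfloor)+(1-\alpha_\omega)f(\lceil1/\omega\rceil)$. For an arbitrary schedule, Step 1 together with $f\le\hat f$ and Jensen for the concave $\hat f$ gives $\tfrac1k H(X^k)\le \hat f\big(\tfrac1k\sum_{j\le k}T_j\big)$ (in expectation for randomized/adaptive schedules); letting $k\to\infty$ and using the rate constraint and the continuity and monotonicity of $\hat f$ yields $\limsup_k\tfrac1k H(N(A_1),\dots,N(A_k))\le\alpha_\omega f(\lfloor1/\omega\rfloor)+(1-\alpha_\omega)f(\lceil1/\omega\rceil)$, the right side of \eqref{eq:periodic_sampling_main}. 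Conversely the periodic schedule \eqref{eq:attack_strategy_1} is non-adaptive with gaps taking the two values $\lfloor1/\omega\rfloor,\lceil1/\omega\rceil$ in long-run proportions $\alpha_\omega,1-\alpha_\omega$ along a fixed pattern, so by Step 1 its entropy rate equals $\alpha_\omega f(\lfloor1/\omega\rfloor)+(1-\alpha_\omega)f(\lceil1/\omega\rceil)$ exactly while meeting $\tfrac1k\sum T_j\to1/\omega$; matching the upper bound, this proves optimality and formula \eqref{eq:periodic_sampling_main}.

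\emph{Main obstacle.} The only substantive ingredient is the concavity of $n\mapsto H(\mathrm{Bin}(n,\lambda))$; everything else is the chain rule and Jensen. The reverse-martingale identity $\mathbb{E}[S_m\mid S_{m+1}]=\tfrac{m}{m+1}S_{m+1}$ reduces it to monotonicity of $g(m)=\mathbb{E}[H(S_m/m)]$, but I would be careful that this monotonicity is a consequence (not an assumption) of the concavity of binary entropy, and that the envelope/limit step in the converse legitimately handles adaptive and randomized sampling --- in particular that the attacker's own randomness cannot inflate the observed entropy rate once it is conditioned upon.
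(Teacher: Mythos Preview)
Your proof is correct and shares the paper's overall skeleton---pass to per-gap contributions, establish a convexity/concavity property, then apply Jensen against the piecewise-linear interpolant---but the key lemma is proved by a genuinely different mechanism. The paper works with the conditional quantity $\mathcal{H}_\lambda(i):=H\big(\boldsymbol{\delta}^i\mid\sum_{j\le i}\delta_j\big)=iH(\lambda)-f(i)$ and shows its mid-point \emph{convexity} by exhibiting the second difference as a conditional mutual information,
\[
\mathcal{H}_\lambda(a)+\mathcal{H}_\lambda(b)-\mathcal{H}_\lambda\!\left(\Big\lfloor\tfrac{a+b}{2}\Big\rfloor\right)-\mathcal{H}_\lambda\!\left(\Big\lceil\tfrac{a+b}{2}\Big\rceil\right)=I\!\left(\boldsymbol{\delta}^{\lfloor(a+b)/2\rfloor-a};\ \sum_{i\le\lfloor(a+b)/2\rfloor}\delta_i\ \Big|\ \sum_{i\le b}\delta_i\right)\ge0,
\]
and then cites a discrete-convex-analysis result (Murota) to pass to a globally convex interpolant. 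Your route to the equivalent concavity of $f(n)=H\big(\sum_{i=1}^n\delta_i\big)$ is probabilistic: you compute the first difference $f(n{+}1)-f(n)=H(\lambda)-\mathbb{E}\big[H\big(S_{n+1}/(n{+}1)\big)\big]$ explicitly and show monotonicity of $m\mapsto\mathbb{E}[H(S_m/m)]$ via the reverse-martingale identity $\mathbb{E}[S_m/m\mid S_{m+1}]=S_{m+1}/(m{+}1)$ together with conditional Jensen for the concave binary entropy. The paper's argument is a one-line information identity but leans on an external reference for the interpolation step; yours is self-contained, yields the exact first-difference formula as a byproduct, and also gives monotonicity of $f$ for free. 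You are in addition more explicit than the paper about adaptive and randomized sampling (the paper's appendix treats the $A_k$ as fixed throughout the computation), which is a welcome clarification given how the theorem is invoked downstream.
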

\begin{proof}
See Appendix~\ref{app:opt_samp}.
\end{proof}

\begin{theorem}
The information leakage of an FCFS scheduler serving a user and an attacker is given by
\begin{equation}
\begin{aligned}
\mathcal{L}_{FCFS}(\lambda) = (1-\lambda)\left( \alpha_{1-\lambda}
{H\left(\sum_{i=1}^{\left\lfloor \frac{1}{1-\lambda} \right\rfloor}\delta_i\right)} + (1-\alpha_{1-\lambda}){H\left(\sum_{i=1}^{\left\lceil \frac{1}{1-\lambda} \right\rceil}\delta_i\right)}\right),
 \end{aligned}
\label{eq:fcfs_main}
\end{equation}
%, $\alpha_{1-\lambda}=\frac{\left\lceil \frac{1}{1-\lambda} \right\rceil-\frac{1}{1-\lambda}}{\left\lceil \frac{1}{1-\lambda} \right\rceil-\left\lfloor \frac{1}{1-\lambda} \right\rfloor}$
where  $\lambda$ is the user's arrival rate, and $\delta_i$'s are i.i.d. $Bernoulli(\lambda)$ random variables.
If $\frac{1}{1-\lambda}\in\mathbb{Z}$, \eqref{eq:fcfs_main} is simplified to
$\mathcal{L}_{FCFS}(\lambda) = (1-\lambda)H\left(\sum_{i=1}^{\frac{1}{1-\lambda}}\delta_i\right)$.

In particular, in the low-rate region, the user's arrival pattern is completely leaked to the attacker:
\begin{equation}
\begin{aligned}
\underset{\lambda\to0}{\lim} \text{ } \mathcal{R}_{FCFS}(\lambda) =1.
\end{aligned}
\end{equation}
\label{theo:fcfs}
\end{theorem}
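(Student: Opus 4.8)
The plan is to sandwich $\mathcal{L}_{FCFS}(\lambda)$ between an explicit attack (lower bound) and an information-theoretic converse, with the two bounds meeting the right side of \eqref{eq:fcfs_main}, and then to specialize to $\lambda\to 0$.

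\emph{Lower bound.} For achievability I would let the attacker run the periodic sampling scheme of Theorem~\ref{theorem:periodic_sampling} (Figure~\ref{fig:attack_strategy}) at a rate $\omega$ just below the stability limit $1-\lambda$, so its inter-sample gaps alternate between $\lfloor 1/\omega\rfloor$ and $\lceil 1/\omega\rceil$. Because the offered load $\lambda+\omega\to 1$, the buffer runs in heavy traffic and the fraction of idle slots tends to $0$ as $\omega\uparrow 1-\lambda$. On every inter-sample interval $(A_k,A_{k+1}]$ that contains no idle slot, the FCFS identity \eqref{eq:queue_service} and the one-job-per-slot rule let the attacker recover $N(A_{k+1})-N(A_k)$ exactly from $(A_k,A_{k+1},D_k,D_{k+1})$; these recovered counts are a deterministic function both of the observation and of $\boldsymbol{\delta}^n$, so $I(\boldsymbol{\delta}^n;\mathbf{A}^m,\mathbf{D}^m)\ge H(\text{recovered counts})$, and Theorem~\ref{theorem:periodic_sampling} identifies the per-slot rate of the latter entropy with $\omega\big(\alpha_\omega H(\sum_{i=1}^{\lfloor 1/\omega\rfloor}\delta_i)+(1-\alpha_\omega)H(\sum_{i=1}^{\lceil 1/\omega\rceil}\delta_i)\big)$, up to an error of the order of the idle-slot fraction. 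Sending $\omega\uparrow 1-\lambda$ and using continuity of $\omega\mapsto\alpha_\omega$ and of the entropies gives the right side of \eqref{eq:fcfs_main} as a lower bound, with the case $1/(1-\lambda)\in\mathbb{Z}$ obtained in the same limit.

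\emph{Upper bound.} For the converse, fix any admissible attack of rate $\omega<1-\lambda$; one may take the attack non-adaptive, adaptive strategies being handled by an analogous chain-rule decomposition of $I(\boldsymbol{\delta}^n;\mathbf{A}^m,\mathbf{D}^m)$. The key claim is that the observation carries no more about $\boldsymbol{\delta}^n$ than the counts of the user's arrivals over the cells of the time-partition generated by the sampling epochs $\{A_k\}$ together with the busy-period start epochs of the buffer: inside a busy period the queue never empties, so by \eqref{eq:queue_service} each $q(A_k)$ equals (total arrivals in the current busy period so far) minus (slots elapsed), which the attacker reconstructs from those cell counts and his own known arrivals. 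The number of cells is at most $m$ plus the number of idle slots, which for load $\lambda+\omega$ is asymptotically $(1-\lambda)n$; the cells are disjoint, so the counts are independent binomials, and concavity of $\ell\mapsto H(\sum_{i=1}^{\ell}\delta_i)$ together with the length budget $\sum(\text{cell lengths})\approx n$ bounds the normalized joint entropy by $f(\bar\omega)$ with $\bar\omega\le 1-\lambda$, where $f(\omega):=\omega\big(\alpha_\omega H(\sum_{i=1}^{\lfloor 1/\omega\rfloor}\delta_i)+(1-\alpha_\omega)H(\sum_{i=1}^{\lceil 1/\omega\rceil}\delta_i)\big)$ (the value attained by equalizing the cell lengths). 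A short computation using the same concavity shows $f$ is non-decreasing, so $f(\bar\omega)\le f(1-\lambda)$, which is precisely \eqref{eq:fcfs_main}; with the lower bound this proves the formula, and when $1/(1-\lambda)\in\mathbb{Z}$ the two floor/ceiling terms coincide and it collapses to $(1-\lambda)H(\sum_{i=1}^{1/(1-\lambda)}\delta_i)$.

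\emph{The limit $\lambda\to 0$, and the main obstacle.} For $0<\lambda<\tfrac12$ we have $1<\tfrac{1}{1-\lambda}<2$, so $\lfloor\tfrac1{1-\lambda}\rfloor=1$, $\lceil\tfrac1{1-\lambda}\rceil=2$, and by \eqref{eq:alpha} $\alpha_{1-\lambda}=2-\tfrac1{1-\lambda}=\tfrac{1-2\lambda}{1-\lambda}$; substituting into \eqref{eq:fcfs_main} gives $\mathcal{L}_{FCFS}(\lambda)=(1-2\lambda)H(\lambda)+\lambda H(\delta_1+\delta_2)\ge(1-2\lambda)H(\lambda)$. Dividing by $H(\lambda)$ and using that information leakage cannot exceed the source entropy rate $H(\lambda)$ gives $1-2\lambda\le\mathcal{R}_{FCFS}(\lambda)\le 1$, hence $\mathcal{R}_{FCFS}(\lambda)\to 1$ as $\lambda\to 0$. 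I expect the delicate step to be the converse's partition claim: queue-emptying events make $q(A_k)$ report only the arrivals since the last idle slot, so one must argue carefully that refining the sampling partition by the random busy-period boundaries — which is what makes the reduction to cell counts exact — raises the effective sampling rate only up to $1-\lambda$ and no further, so that $f(1-\lambda)$ remains the ceiling; the concavity of $\ell\mapsto H(\sum_{i=1}^{\ell}\delta_i)$, a known property of binomial entropies that I would invoke, is the recurring analytic ingredient (it also underlies Theorem~\ref{theorem:periodic_sampling} and the monotonicity of $f$).
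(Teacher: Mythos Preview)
Your lower bound via periodic sampling at $\omega\uparrow 1-\lambda$ and your $\lambda\to 0$ computation are both correct and match the paper's approach. The gap is in your converse.

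The step that fails is the refinement of the sampling partition by busy-period start epochs, followed by the claim that the resulting cell counts are ``independent binomials'' to which you can apply concavity. The busy-period boundaries are themselves functions of $\boldsymbol{\delta}$: a busy period starts exactly when a job arrives to an empty queue, so conditioning on where the boundaries fall already constrains the $\delta_i$'s inside the adjacent cells. Conditional on this random partition, the cell counts are therefore \emph{not} binomials with parameters given by the cell lengths, and the concavity/length-budget bound does not apply as written. This is precisely the ``delicate step'' you flag, and it is not merely delicate --- it is where the argument breaks.

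The paper avoids the issue entirely, and the observation you are missing is that the Lindley recursion
\[
q(A_k)\;=\;\bigl(q(A_{k-1})+1+X_k-(A_k-A_{k-1})\bigr)_+
\]
already determines every $q(A_k)$ from $(\mathbf{A}^m,\mathbf{X}^m)$ alone, the $(\cdot)_+$ absorbing idle periods automatically. Hence $(\mathbf{A}^m,q(A_1),\dots,q(A_m))$ is a deterministic function of $(\mathbf{A}^m,\mathbf{X}^m)$, and data processing gives $I(\boldsymbol{\delta}^n;\mathbf{A}^m,\mathbf{D}^m)\le I(\boldsymbol{\delta}^n;\mathbf{A}^m,\mathbf{X}^m)\le H(\mathbf{X}^m)$, the last step using the Markov chain $\boldsymbol{\delta}^n\to\mathbf{X}^m\to\mathbf{A}^m$ (adaptive attacks depend on $\boldsymbol{\delta}$ only through past departures, which are determined by past $X_k$'s). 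Now $H(\mathbf{X}^m)=H(N(A_1),\dots,N(A_m))$ is exactly the sampled-process entropy bounded by Theorem~\ref{theorem:periodic_sampling}, and monotonicity in the sampling rate yields the ceiling at $\omega=1-\lambda$. For the converse you only need that $\mathbf{X}^m$ \emph{determines} the observations, not that the observations recover $\mathbf{X}^m$; your worry about queue-emptying events losing information runs in the wrong direction, and once you reverse it the busy-period refinement is unnecessary.
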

\begin{proof}
We first prove the  {\em converse} part  of~\eqref{eq:fcfs_main} by showing that there exists no attack strategy that allows the attacker to learn more information than~\eqref{eq:fcfs_main}.

From~\eqref{eq:privacy} and~\eqref{eq:queue_service}, we have
\begin{equation}
\begin{aligned}
\mathcal{L}_{FCFS} (\lambda)
&\overset{}{=}\underset{\mathbf{A}:\underset{k\to\infty}{\lim}\frac{k}{A_k}<1-\lambda}{\max}\text{ }
\underset{n\to\infty}{\lim}\frac{I\left(\boldsymbol{\delta}^{n};\mathbf{A}^m, q\left(A_1\right), q\left(A_2\right),\cdots, q\left(A_m\right)\right)}{n}\\
&\overset{(a)}{\leq} \underset{\mathbf{A}:\underset{k\to\infty}{\lim}\frac{k}{A_k}<1-\lambda}{\max}\text{ }
\underset{n\to\infty}{\lim} \frac{I\left(\boldsymbol{\delta}^{n}; \mathbf{A}^{m}, \mathbf{X}^{m}\right)}{n}\\
&\overset{}{=}\underset{\mathbf{A}:\underset{k\to\infty}{\lim}\frac{k}{A_k}<1-\lambda}{\max}\text{ }
\underset{n\to\infty}{\lim} \frac{H\left(\mathbf{X}^{m}\right)+H\left(\mathbf{A}^{m}|\mathbf{X}^{m}\right)-H\left(\mathbf{A}^{m}|\boldsymbol{\delta}^n\right)- H\left(\mathbf{X}^m|\mathbf{A}^m,\boldsymbol{\delta}^n\right)}{n}\\
&\overset{(b)}{=}\underset{\mathbf{A}:\underset{k\to\infty}{\lim}\frac{k}{A_k}<1-\lambda}{\max}\text{ }
\underset{n\to\infty}{\lim} \frac{H\left(\mathbf{X}^{m}\right)+H\left(\mathbf{A}^{m}|\mathbf{X}^{m}\right)-H\left(\mathbf{A}^{m}|\boldsymbol{\delta}^n\right)}{n}\\
&\overset{(c)}{\leq}\underset{\mathbf{A}:\underset{k\to\infty}{\lim}\frac{k}{A_k}<1-\lambda}{\max}\text{ }
\underset{n\to\infty}{\lim} \frac{H\left(\mathbf{X}^{m}\right)}{n}\\
\end{aligned}
\label{eq:temp_6}
\end{equation}
where $X_k$ is the number of the user's jobs that have arrived between time $A_{k-1}$ and $A_{k}$ and
$X_k=\sum_{j=A_{k-1}}^{A_{k}-1}\delta_{j}
\label{eq:pattern_1}$.
$(a)$ results from the application of  data processing inequality~\cite[Theorem 2.8.1]{Cover&Thomas:91} to the Markov chain
\begin{equation}
\begin{aligned}
\boldsymbol{\delta}^{n} \to {\mathbf{A}}^{{m}}, {\mathbf{X}}^{{m}} \to {\mathbf{A}}^{{m}}, q\left({A}_1\right),\cdots, q\left({A}_{{m}}\right).
\end{aligned}
\label{eq:info_flow}
\end{equation} 
$(b)$ follows from the fact that $\mathbf{X}^m$ is a deterministic function of $\mathbf{A}^m$ and $\boldsymbol{\delta}^n$, and $(c)$ results from the Markov chain\footnote{For FCFS, attack strategies can be divided into two types. The first type 
is fully independent with the user's behavior. 
The second type makes use of past departure history; $A_k$ depends on previous departure $D_{k-1}$. 
Since $D_{k-1}$ is uniquely determined once $\mathbf{X}^{k-1}$ is given,  ${\mathbf{A}}^{{m}}$ must be independent with
$\boldsymbol{\delta}^{n}$ given ${\mathbf{X}}^{{m}}$. This implies the Markov chain in~\eqref{eq:info_flow}.}
\begin{equation}
\boldsymbol{\delta}^{n} \to {\mathbf{X}}^{{m}} \to {\mathbf{A}}^{{m}}.
\label{eq:fcfs_attack_chain}
\end{equation} 

Recall the counting function in~\eqref{eq:count_func}. The number of user's jobs sent by time  $A_k$ is
$N(A_k)=\sum_{j=1}^{k}X_j$. Hence, \eqref{eq:temp_6} can be rewritten as
\begin{equation}
\begin{aligned}
&\mathcal{L}_{FCFS}(\lambda) \leq \underset{\mathbf{A}:\underset{k\to\infty}{\lim}\frac{k}{A_k}<1-\lambda}{\max}\text{ }
\underset{n\to\infty}{\lim} \frac{H\left(N\left(A_1\right),\cdots,N\left(A_m\right)\right)}{n}.
\end{aligned}
\label{eq:sampled_as_a_bound}
\end{equation}
This implies that the attacker learns at most a sampled version of the user's arrival process through this side channel.

From~\eqref{eq:m}, \eqref{eq:sampled_as_a_bound} can be rewritten as 
\begin{equation}
\begin{aligned}
\mathcal{L}_{FCFS}(\lambda)
&\leq \underset{\mathbf{A}:\underset{k\to\infty}{\lim}\frac{k}{A_k}<1-\lambda}{\max}\text{ }
\underset{n\to\infty}{\lim} \frac{H\left(N\left(A_1\right),\cdots,N\left(A_m\right)\right)}{m}\cdot \frac{m}{A_m} \cdot \frac{A_m}{n} \\
&= \underset{\mathbf{A}:\underset{k\to\infty}{\lim}\frac{k}{A_k}<1-\lambda}{\max}\text{ }\omega
\underset{n\to\infty}{\lim} \frac{H\left(N\left(A_1\right),\cdots,N\left(A_m\right)\right)}{m}. \end{aligned}
\end{equation}
Applying Theorem~\ref{theorem:periodic_sampling},  
\begin{equation}
\begin{aligned}
\mathcal{L}_{FCFS}(\lambda) &\overset{}{\leq}  \underset{\omega<1-\lambda}{\max}\text{ }\omega
\left (\alpha_\omega  H\left(\sum_{i=1}^{\left\lfloor \frac{1}{\omega} \right\rfloor}\delta_i\right)  +(1-\alpha_\omega) H\left(\sum_{i=1}^{\left\lceil \frac{1}{\omega} \right\rceil}\delta_i\right) \right) \\
&\overset{(d)}{=} (1-\lambda)\left (\alpha_{1-\lambda}  H\left(\sum_{i=1}^{\left\lfloor \frac{1}{1-\lambda} \right\rfloor}\delta_i\right)  +(1-\alpha_{1-\lambda}) H\left(\sum_{i=1}^{\left\lceil \frac{1}{1-\lambda} \right\rceil}\delta_i\right) \right),
\end{aligned}
\end{equation}
where $(d)$ simply applies the monotonic-increasing property of $H\left(\sum_{i=1}^{k}\delta_i\right)$ as a function of $k$.
This completes the proof for the converse.

To prove the achievability of~\eqref{eq:fcfs_main}, we consider the periodic sampling strategy defined in~\eqref{eq:attack_strategy_1}, and derive a lower bound on the information leakage, which turns out to meet the upper bound in~\eqref{eq:fcfs_main}. 
See Appendix~\ref{app:fcfs}.

Once~\eqref{eq:fcfs_main} is proven, we take the limit of leakage ratio as $\lambda \to 0$, 
\begin{equation}
\begin{aligned}
&\underset{\lambda\to0}{\lim} \text{ } \mathcal{R}_{FCFS}(\lambda)
 \overset{(e)}{=}  \frac{H\left(\delta_1\right)}{H(\lambda)} \overset{(f)}{=} 1.
\end{aligned}
\end{equation}
where $(e)$ holds because $\underset{\lambda\to0}{\lim}\alpha_{1-\lambda}=0$ according to~\eqref{eq:alpha} and $(f)$ follows from the Bernoulli distribution of $\delta_i$'s.
\end{proof}

 Theorem~\ref{theo:fcfs} proves that the attacker can recover the number of user's jobs arriving in each sampling period,  which becomes an accurate estimate of the user's job arrival pattern if the sampling frequency is  high. 
 When the user sends jobs at a very low rate, the attacker can sample  the queue state almost every time slot, and thus would learn the user's arrival pattern  completely (See Figure~\ref{fig:it_summary}).

\subsection{Round Robin}
\label{subsec:rr}
The next policy we study is round robin, where two users take turns to receive services. 
In each time slot, the service is switched to the next user who has jobs waiting in the queue; the scheduler never serves any single user continuously
unless the other user runs out of jobs. 
To derive a lower bound on the information leakage, we consider the nonstop monitoring attack introduced in~\S\ref{sec:lqf} (Figure~\ref{fig:nonstop_attack}), where 
the job arrival times and departure times satisfy~\eqref{eq:nonstop}. 
For round robin, this attack forces the scheduler to serve the attacker continuously if possible.
As a result,  the attacker 
learns when the user's queue becomes empty, as illustrated in  Figure~\ref{fig:rr-nonstop}, or 
\begin{equation}
\begin{aligned}
q\left({A}_k\right) \begin{cases}= 0 &  \quad \text{ if } {D}_k-{A}_k=1   \\ >0 &\quad \text{ if } {D}_k-{A}_k=2 \end{cases},
\quad k \in\mathbb{Z}.
\end{aligned}
\label{eq:rr-nonstop}
\end{equation}

Notice in~\eqref{eq:rr-nonstop}, the time gap between two consecutive  departures of attacker is at most two time slots. Hence, this attack is only applicable for the region, where the user's rate $\lambda\leq 0.5$.
\begin{figure}[t]
   \centering
   \includegraphics[width=0.6\columnwidth]{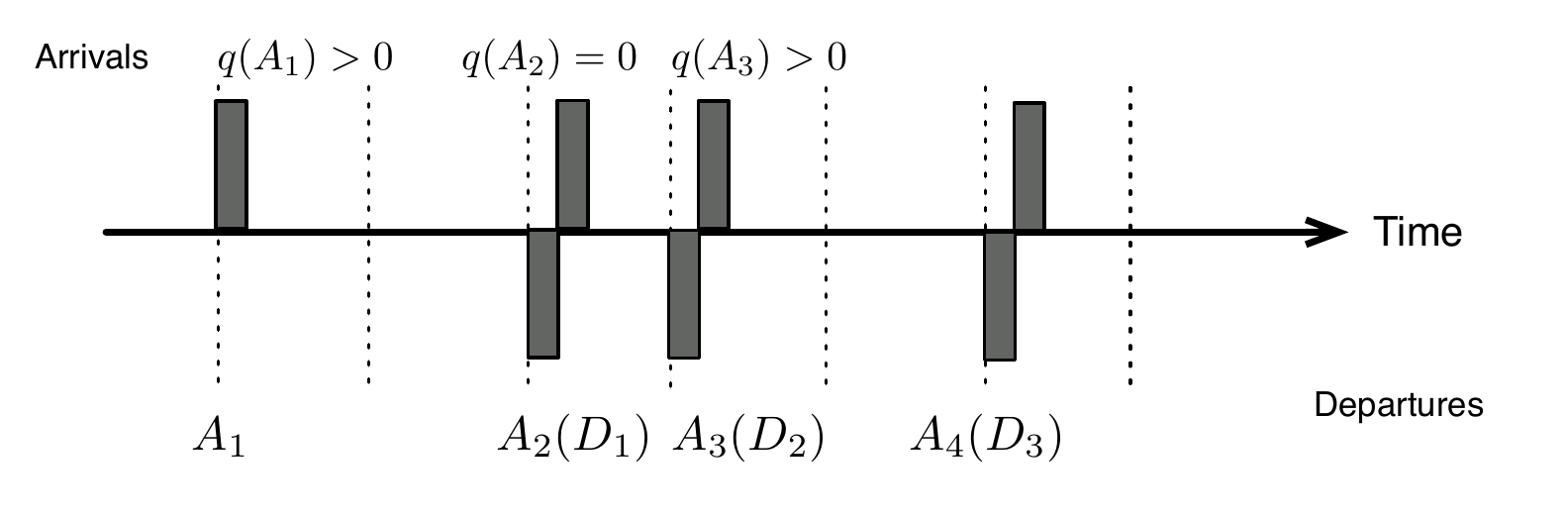} 
   \caption{{Nonstop monitoring on the round-robin scheduler:} the attacker's job are serviced instantly if there is no job from the user in the queue. Otherwise the scheduler needs to serve the user for one time slot before switching back to the attacker. As a result, the queuing delay ${D}_k-{A}_k$ indicates whether user's queue is empty, i.e., $q\left({A}_k\right)=0$.}
   \label{fig:rr-nonstop}
\end{figure}

Define the {\em busy period} of the system as the time gap between  two  times when the attacker finds the user's queue is empty, and denote
the $r^{th}$ busy period by $B_r$. $B_r$'s can be written as
\begin{equation}
\begin{aligned}
B_r = \inf\left\{{A}_k: {A}_k > \sum^{r-1}_{j=1} B_j \text{ and } q\left({A}_k\right)=0\right\} -\sum^{r-1}_{j=1} B_j.
\label{eq:rr_busy_def}
\end{aligned}
\end{equation}
What the attacker learns from the side channel is summarized by  the busy period sequence $\mathbf{B} =\{B_1, B_2, \cdots\}$.

\begin{theorem}
The information leakage of a round robin scheduler serving a user and an attacker is lower-bounded by
\begin{equation}
\begin{aligned}
\mathcal{L}_{RR}(\lambda)\geq \left(1-2\lambda\right) H(B), \qquad \text{for } \lambda\leq0.5,
\end{aligned}
\label{eq:rr_lb_main}
\end{equation}
 where $\lambda$ is the user's arrival rate, $B$ is the random variable distributed as
\begin{equation}
\begin{aligned}
\mathbb{P}\left(B=k \right) =\begin{cases} 2^{r-1}\lambda^{r-1}(1-\lambda)^{r}\max\left\{\sum_{j=1}^{\left\lfloor\frac{r-2}{2}\right\rfloor} \frac{(r-2)!}{(r-2-2j)!j!(j+1)!}2^{-2j-1},1\right\} &\text{ if }k=2r-1,\\
0 &  \text{ otherwise,}\end{cases}\label{eq:rr_busy_1}
\end{aligned}
\end{equation} 
where $r\in\mathbb{Z^{+}}$. 
In particular, when the user's rate is very low, the attacker learns completely the user's traffic pattern, i.e.,
\begin{equation}
\begin{aligned}
\underset{\lambda\to0}{\lim}\text{ } \mathcal{R}_{RR}(\lambda) =1. 
\label{eq:rr_ratio}
\end{aligned}
\end{equation}
\label{theo:rr_lb}
\end{theorem}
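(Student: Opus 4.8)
The plan is to lower‑bound $\mathcal{L}_{RR}(\lambda)$ by evaluating the mutual‑information rate in~\eqref{eq:privacy} for the single \emph{nonstop‑monitoring} attack of~\S\ref{sec:lqf}, which is admissible for $\lambda\le 0.5$ since by~\eqref{eq:rr-nonstop} the attacker's inter‑departure gaps never exceed two slots (the boundary case $\omega=1-\lambda$ being handled by an arbitrarily small rate perturbation, or by reading the rate constraint as non‑strict, as in Theorem~\ref{theo:lqf_main}). Under this strategy, \eqref{eq:nonstop} makes $\mathbf{A}^{m}$ a function of $\mathbf{D}^{m}$, and \eqref{eq:rr-nonstop} shows that each delay $D_k-A_k\in\{1,2\}$ reveals only whether the user's queue is empty; hence $(\mathbf{A}^{m},\mathbf{D}^{m})$ is information‑equivalent to the list of completed busy periods $\mathbf{B}^{(L_n)}=\{B_1,\dots,B_{L_n}\}$, where $L_n$ is the number of busy periods finished by time $n$. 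By the data‑processing inequality it then suffices to lower‑bound $\lim_{n}I(\boldsymbol{\delta}^{n};\mathbf{B}^{(L_n)})/n$.

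The argument hinges on two structural observations. First, for the deterministic nonstop‑monitoring rule every closed busy period is a function of $\boldsymbol{\delta}^{n}$, so $H(\mathbf{B}^{(L_n)}\mid\boldsymbol{\delta}^{n})=0$ and $I(\boldsymbol{\delta}^{n};\mathbf{B}^{(L_n)})=H(\mathbf{B}^{(L_n)})$. Second, the end of a busy period is a regeneration epoch — the user's queue is empty, the attacker starts a fresh job, and the $Bernoulli(\lambda)$ arrival stream is memoryless — so $B_1,B_2,\dots$ are i.i.d.\ copies of the variable $B$ in~\eqref{eq:rr_busy_1} and $H(\mathbf{B}^{(L_n)})=L_n H(B)$. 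Since $\{B_j\}$ is an i.i.d.\ renewal sequence, $L_n/n\to 1/\mathbb{E}[B]$ almost surely, and $\mathbb{E}[B]$ is most cleanly obtained from a work‑conservation balance rather than from~\eqref{eq:rr_busy_1}: inside a busy period of length $2r-1$ the scheduler alternates service between the two queues and clears exactly $r-1=(B-1)/2$ user jobs, so matching the user's throughput $\lambda=\mathbb{E}[(B-1)/2]/\mathbb{E}[B]$ forces $\mathbb{E}[B]=1/(1-2\lambda)$. Assembling the pieces,
\[
\mathcal{L}_{RR}(\lambda)\ \ge\ \lim_{n\to\infty}\frac{I(\boldsymbol{\delta}^{n};\mathbf{B}^{(L_n)})}{n}\ =\ \lim_{n\to\infty}\frac{L_n H(B)}{n}\ =\ \frac{H(B)}{\mathbb{E}[B]}\ =\ (1-2\lambda)\,H(B),
\]
which is~\eqref{eq:rr_lb_main}. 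For the final assertion, $\mathcal{R}_{RR}(\lambda)\le 1$ always holds because the leakage cannot exceed the source entropy rate $H(\lambda)$, so it is enough to show $(1-2\lambda)H(B)/H(\lambda)\to 1$ as $\lambda\to 0$; this follows from~\eqref{eq:rr_busy_1}, the point being that at vanishing rate the busy period takes its minimal length with probability $1-\lambda$, so to leading order $H(B)$ equals the binary entropy $H(\lambda)$ of that event while the (rare) longer busy periods contribute only a lower‑order term.

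The step I expect to be the real work is the \emph{combinatorial identification of the busy‑period law~\eqref{eq:rr_busy_1}}. One must enumerate, for each $r$, the user‑arrival patterns over the window that drives a busy period of length $2r-1$, subject to the constraint that the user's queue stays occupied until the terminal, empty‑detecting sample; this is a constrained lattice‑path count whose solution produces the Catalan‑triangle weights $\tfrac{(r-2)!}{(r-2-2j)!\,j!\,(j+1)!}$ together with the geometric factors $2^{r-1}\lambda^{r-1}(1-\lambda)^{r}$ and the $2^{-2j-1}$ corrections, and one must check that these busy‑period windows genuinely tile the time axis so that the $B_j$ are exactly i.i.d.\ with no residual coupling. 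A secondary technical point is to bound the contribution of the single incomplete busy period straddling time $n$ by $o(n)$ in both $H(\mathbf{B}^{(L_n)})$ and the normalization, which is what makes the displayed limit exist and equal $(1-2\lambda)H(B)$.
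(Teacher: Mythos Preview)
Your proposal is correct and follows essentially the same route as the paper: nonstop monitoring, reduction of the attacker's observations to the i.i.d.\ busy‑period sequence, and the entropy‑rate identity $H(B)/\mathbb{E}[B]=(1-2\lambda)H(B)$, with the distribution~\eqref{eq:rr_busy_1} obtained (as you anticipate) by a Catalan‑type lattice‑path count on the user's queue‑length chain. The only noteworthy variation is your computation of $\mathbb{E}[B]$ via the user‑throughput balance $\lambda=\mathbb{E}[(B-1)/2]/\mathbb{E}[B]$, whereas the paper obtains the same value from the attacker's rate; both are one‑line arguments.
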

\begin{proof}
Similar to~\eqref{eq:lqf_lb},  the nonstop monitoring attack results in a lower bound on the leakage given by
\begin{equation}
\begin{aligned}
\mathcal{L}_{RR}(\lambda) &\overset{}{\geq} \underset{n\to\infty}{\lim}\frac{I\left(\boldsymbol{\delta}^{n}; {\mathbf{D}}^{{m}}\right)}{n}\\
& \overset{(a)}{=}\underset{n\to\infty}{\lim}\frac{H\left({\mathbf{D}}^{{m}}\right)}{n}
\end{aligned}
\label{eq:temp_8_0}
\end{equation}
where $(a)$ holds because the attacker's departure times for the round robin scheduler are deterministic once the user's arrival pattern is known. 
From~\eqref{eq:nonstop} and \eqref{eq:rr-nonstop}, we have
\begin{equation}
\begin{aligned}
H\left({\mathbf{D}}^{{m}}\right) &\overset{}{=}H\left(\mathbbm{1}_{\left\{q\left({A}_1\right)=0\right\}}, \mathbbm{1}_{\left\{q\left({A}_2\right)=0\right\}}, \cdots, \mathbbm{1}_{\left\{q\left({A}_{{m}}\right)=0\right\}}\right)\\
&\overset{(b)}{\geq}H\left(B_1,B_2, \cdots,B_{{m}}\right)
\end{aligned}
\label{eq:temp_8}
\end{equation}
where $(b)$ follows from the definition of busy periods in~\eqref{eq:rr_busy_def}. 

It can be further shown that the busy periods seen by the attacker, $B_r$'s,  are i.i.d. distributed as~\eqref{eq:rr_busy_1} and have mean $\frac{1}{1-2\lambda}$.
The proof is presented in Appendix~\ref{app:rr_busy}. 
Therefore, 
plugging~\eqref{eq:temp_8} into~\eqref{eq:temp_8_0} proves~\eqref{eq:rr_lb_main}. 
Additionally, taking the limit of~\eqref{eq:rr_lb_main} at $\lambda\to0$,  the  leakage ratio is lower-bounded as 
\begin{equation}
\begin{aligned}
\underset{\lambda\to0}{\lim}\text{ } \mathcal{R}_{RR} (\lambda)&\geq  \underset{\lambda\to0}{\lim} (1-2\lambda)\frac{H\left(B\right)}{H(\lambda)}\\
&\overset{(c)}{\geq} \underset{\lambda\to0}{\lim} \frac{-(1-\lambda)\log(1-\lambda)-\lambda(1-\lambda)^2\log(\lambda(1-\lambda)^2)}{-(1-\lambda)\log(1-\lambda)-\lambda\log\lambda},
%&\overset{}{=}1,
\end{aligned}
\label{eq:temp_81}
\end{equation}
where  $(c)$ holds by plugging in  the PMF of random variable $B$ in~\eqref{eq:rr_busy_1} only for the terms $k=1$ and $k=3$. 
The limit on the right hand side of inequality $(c)$ goes to 1 as $
\lambda\to0$. This completes the proof. 
\end{proof}

Round robin is one of the simplest scheduling algorithms in multi-processor operation systems and known for fairness~\cite{hahne86}. 
However, our analysis shows that in the low-rate region the round robin scheduler almost entirely leaks a user's traffic pattern through the timing side channel (See Figure~\ref{fig:it_summary}).

\subsection{Work-Conserving TDMA}
\label{subsec:wctdma}
The schedulers analyzed so far all leak substantial information about the user's traffic, especially when the user's job arrival rate is low. 
 This imposes serious
threat to user privacy since many network systems have light workloads. 
In fact, studies have shown that average server utilization in real world data centers is only about 5\% to 20\%~\cite{siegele2008let}. 
In this section, we study WC-TDMA, a tweak of TDMA,
which can reduce the information leakage in the low-rate traffic region by half.

In WC-TDMA, time slots are pre-assigned to the user and attacker equally. Unlike TDMA,
if in some slot the reserved user has no jobs left, the WC-TDMA scheduler serves the other
user who has jobs waiting in the queue. For the sake of convenience, we assume all odd
time slots are assigned to the user, and all even slots are assigned to the attacker.

We consider an attack strategy, where the attacker sends a job in each odd slot and stays idle in all even slots, i.e., 
\begin{equation}
\begin{aligned}
{A}_k = 2k-1, \quad k \in\mathbb{Z}.
\label{eq:half_attack}
\end{aligned}
\end{equation}
Clearly, this attack consumes half of the service capacity, so is only applicable when the user's rate $\lambda\leq 0.5$.

Since odd time slots are reserved for the user, the attacker's jobs sent on those slots would experience delays as follows:\begin{equation}
\begin{aligned}
{D}_k-{A}_k =\begin{cases} 1 & \text{ if } q\left({A}_k\right)=0 \text{ and }\delta_{{A}_k} =0
 \\ 2 & \text{ otherwise} \end{cases}, \quad k\in\mathbb{Z}.
\label{eq:nitdma_q}
\end{aligned}
\end{equation} 
Therefore, the attacker learns  the time slots when the queue is empty and the user has not issued a job. 

Define the {`busy period'}, $B'_r, r\in\mathbb{Z}$, to be the time gap between two successive times when the attacker sees an empty queue. Then,
\begin{equation}
\begin{aligned}
B'_r = \inf\left\{{A}_k: {A}_k> \sum^{r-1}_{j=1} B'_j \text{ and } q\left({A}_k\right)=0\right\} -\sum^{r-1}_{j=1} B'_j.
\label{eq:nitdma_busy_def}
\end{aligned}
\end{equation}

\begin{theorem}
The information leakage ratio of a WC-TDMA scheduler serving a user and an attacker  is lower-bounded  by
\begin{equation}
\begin{aligned}
\mathcal{L}_{WC-TDMA}(\lambda)\geq  \frac{1-2\lambda}{2-2\lambda}H\left(B\right), \qquad \text{for } \lambda\leq0.5,
\label{eq:nitdma_lb}
\end{aligned}
\end{equation}
 where $\lambda$ is the user's arrival rate, and $B$ is the random variable distributed as~\eqref{eq:rr_busy_1}.
%\begin{equation}
%\begin{small}
%\begin{aligned}
%Pr\left(B=k \right) = \begin{cases} 2^{r-1}\lambda^{r-1}(1-\lambda)^{r}\max\left\{\sum_{j=1}^{\left\lfloor\frac{r-2}{2}\right\rfloor} \frac{(r-2)!}{(r-2-2j)!j!(j+1)!}2^{-2j-1},1\right\}& \text{ if }k=2r-1,\\
%0 & \text{ otherwise,}\end{cases}
%\label{eq:nitdma_busy_2}
%\end{aligned}
%\end{small}
%\end{equation} 
%where $r\in\mathbb{Z^{+}}$.
\label{theo:nitdma}
\end{theorem}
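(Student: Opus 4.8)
The plan is to evaluate the definition \eqref{eq:privacy} at the single, non‑adaptive probing schedule \eqref{eq:half_attack} (which is admissible, since $\lim_{k\to\infty}k/A_k=1/2<1-\lambda$ for $\lambda<\tfrac12$), thereby producing a lower bound on $\mathcal{L}_{WC-TDMA}(\lambda)$; the argument runs parallel to the round‑robin proof of Theorem~\ref{theo:rr_lb}. First I would note that $\mathbf{A}^{m}=(1,3,\dots,2m-1)$ is deterministic and hence carries no information, and that, the probing schedule being fixed, $\mathbf{D}^{m}$ is a deterministic function of the user arrivals $\boldsymbol{\delta}^{n}$ (each $D_{k}$ depends only on $\delta_{1},\dots,\delta_{2k-1}$); therefore $I(\boldsymbol{\delta}^{n};\mathbf{A}^{m},\mathbf{D}^{m})=H(\mathbf{D}^{m})$. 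By \eqref{eq:nitdma_q}, observing $\mathbf{D}^{m}$ is the same as observing the indicator sequence $\big(\mathbbm{1}_{\{q(A_{k})=0,\ \delta_{A_{k}}=0\}}\big)_{k\le m}$, and the busy periods $B'_{1},B'_{2},\dots$ of \eqref{eq:nitdma_busy_def} are a deterministic function of these indicators. Hence
\begin{equation}
\mathcal{L}_{WC-TDMA}(\lambda)\ \geq\ \lim_{n\to\infty}\frac{H(\mathbf{D}^{m})}{n}\ \geq\ \lim_{n\to\infty}\frac{H\big(B'_{1},\dots,B'_{R_{n}}\big)}{n},
\end{equation}
where $R_{n}$ is the number of busy periods completed by time $n$ (a trailing incomplete busy period only adds entropy).

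The core of the proof is to determine the law of the sequence $\{B'_{r}\}$. I would first establish that $\{B'_{r}\}$ is i.i.d.: each busy period ends at an odd slot at which the attacker sees $q(A_{k})=0$ and $\delta_{A_{k}}=0$, i.e.\ with the user's queue empty and no pending arrival, so the system regenerates there. To identify the common distribution I would couple the ``queue length seen at probe times'' process of WC-TDMA under \eqref{eq:half_attack} with the nonstop‑monitoring process on round robin analyzed in Appendix~\ref{app:rr_busy}: in both systems a delay‑$2$ probe spans exactly two slots in which the user is served once and receives a $\mathrm{Binomial}(2,\lambda)$ number of arrivals, and both systems regenerate under the identical event $\{q=0,\ \delta=0\}$; the only difference is that the terminal, delay‑$1$ probe occupies one slot in round robin but two slots here (because the attacker's probes are two slots apart, and a user arrival landing in that extra slot is served on the same slot and so does not disturb the regeneration). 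Consequently the number of probes per busy period has the same distribution in the two schedulers, and a WC-TDMA busy period is exactly one slot longer than the corresponding round‑robin one, i.e.\ $B'_{r}\overset{d}{=}B+1$ with $B$ as in \eqref{eq:rr_busy_1}. In particular $H(B'_{r})=H(B)$ and $\mathbb{E}[B'_{r}]=\mathbb{E}[B]+1=\tfrac{1}{1-2\lambda}+1=\tfrac{2-2\lambda}{1-2\lambda}$, which is finite precisely because $\lambda<\tfrac12$.

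Finally I would close the estimate with renewal theory. Since $\{B'_{r}\}$ is i.i.d.\ with finite mean, $R_{n}/n\to 1/\mathbb{E}[B'_{r}]=\tfrac{1-2\lambda}{2-2\lambda}$, and by independence $H(B'_{1},\dots,B'_{R_{n}})=\mathbb{E}[R_{n}]\,H(B)+o(n)$ (the entropy rate of this marked renewal process), so
\begin{equation}
\mathcal{L}_{WC-TDMA}(\lambda)\ \geq\ \lim_{n\to\infty}\frac{H\big(B'_{1},\dots,B'_{R_{n}}\big)}{n}\ =\ \frac{H(B)}{\mathbb{E}[B'_{r}]}\ =\ \frac{1-2\lambda}{2-2\lambda}\,H(B),
\end{equation}
which is \eqref{eq:nitdma_lb}. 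The main obstacle is the middle step --- making the coupling with the round‑robin busy‑period analysis precise. This requires care with the within‑slot ordering of arrivals, departures and service assignments, with the initial transient, and with the subtle point that the ``extra'' slot of a WC-TDMA idle probe may carry a user arrival that is nevertheless cleared on that slot and hence leaves the renewal structure intact. A secondary, routine‑but‑not‑instant point is justifying $H(B'_{1},\dots,B'_{R_{n}})/n\to H(B)/\mathbb{E}[B'_{r}]$ when $R_{n}$ is a random index, which calls for a uniform‑integrability or AEP‑type argument rather than a bare appeal to additivity.
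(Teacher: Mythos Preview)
Your proposal is correct and follows essentially the same approach as the paper's proof: fix the odd-slot attack \eqref{eq:half_attack}, lower-bound $H(\mathbf{D}^{m})$ by the entropy of the induced busy-period sequence, couple that sequence with the round-robin busy periods of Appendix~\ref{app:rr_busy} to obtain $B'_{r}\overset{d}{=}B+1$ (hence $H(B'_{r})=H(B)$ and $\mathbb{E}[B'_{r}]=(2-2\lambda)/(1-2\lambda)$), and finish with the renewal rate. Your explicit treatment of the random count $R_{n}$ and the passage $H(B'_{1},\dots,B'_{R_{n}})/n\to H(B)/\mathbb{E}[B'_{r}]$ is in fact more careful than the paper's presentation, which writes the index on $B'$ as $k$ and elides this point.
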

\begin{proof}
Like~\eqref{eq:temp_8_0} in the proof of Thereom~\ref{theo:rr_lb}, a lower bound on information leakage can be written as 
\begin{equation}
\begin{aligned}
\mathcal{L}_{WC-TDMA}(\lambda)
&\overset{}{\geq}\underset{k\to\infty}{\lim}\frac{H\left({\mathbf{D}}^{k}\right)}{2k}\\
&\overset{(a)}{=}\underset{k\to\infty}{\lim}\frac{H\left(\mathbbm{1}_{\left\{q\left(1\right)=0\right\}}, \mathbbm{1}_{\left\{q\left(3\right)=0\right\}}, \cdots, \mathbbm{1}_{\left\{q\left(2k-1\right)=0\right\}}\right)}{2k}\\
&\overset{(b)}{\geq}\underset{k\to\infty}{\lim}\frac{H\left(B'_1,B'_2, \cdots, B'_{{k}}\right)}{2k}\\
\end{aligned}
\label{eq:nitdma_lb}
\end{equation}
where $(a)$ follows from~\eqref{eq:half_attack} and~\eqref{eq:nitdma_q},  and $(b)$ follows from~\eqref{eq:nitdma_busy_def}.

It can be shown that the busy periods seen by the attacker, $B'_r$'s,  are i.i.d. distributed as~\eqref{eq:rr_busy_1} and have mean $\frac{2-2\lambda}{1-2\lambda}$ (See  Appendix~\ref{app:nitdma_busy}). Hence, \eqref{eq:nitdma_lb} readily implies the desired lower bound.   
\end{proof}
\begin{corollary}
The information leakage of a WC-TDMA scheduler 
serving a user and an attacker is given by 
\begin{equation}
\begin{aligned}
\underset{\lambda\to0}{\lim}\text{ }  \mathcal{R}_{WC-TDMA}(\lambda) =\frac{1}{2}, \qquad \text{if } \lambda\to0.
\end{aligned}
\label{eq:wctdma_cor}
\end{equation}
\label{cor:nitdma}
\end{corollary}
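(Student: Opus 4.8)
The plan is to establish the claimed limit $\lim_{\lambda\to0}\mathcal{R}_{WC-TDMA}(\lambda)=\frac{1}{2}$ by trapping $\mathcal{R}_{WC-TDMA}(\lambda)$ between two bounds whose limits as $\lambda\to0$ both equal $\frac{1}{2}$: the lower bound is read off directly from Theorem~\ref{theo:nitdma}, and the matching upper bound is a converse showing that, in the low-rate regime, no WC-TDMA attacker can extract more than a rate-$\frac{1}{2}$ sample of the user's arrival process.

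For the lower bound, divide the inequality of Theorem~\ref{theo:nitdma} by $H(\lambda)$ to obtain $\mathcal{R}_{WC-TDMA}(\lambda)\ge\frac{1-2\lambda}{2-2\lambda}\cdot\frac{H(B)}{H(\lambda)}$, with $B$ distributed as in~\eqref{eq:rr_busy_1}. The prefactor clearly tends to $\frac{1}{2}$, so it suffices to show $\liminf_{\lambda\to0}\frac{H(B)}{H(\lambda)}\ge1$. This is exactly the estimate already performed for round robin in~\eqref{eq:temp_81}: retaining only the first atoms of~\eqref{eq:rr_busy_1} (those at $k=1$ and $k=3$) lower-bounds $H(B)$ by an expression in which the $-\lambda\log\lambda$ term dominates as $\lambda\to0$, just as in $H(\lambda)=-(1-\lambda)\log(1-\lambda)-\lambda\log\lambda$, so the ratio tends to $1$. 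Hence $\liminf_{\lambda\to0}\mathcal{R}_{WC-TDMA}(\lambda)\ge\frac{1}{2}$.

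For the converse I would bound $I\left(\boldsymbol{\delta}^{n};\mathbf{A}^m,\mathbf{D}^m\right)$ above by $\frac{n}{2}H(\lambda)(1+o(1))$ for every admissible $\mathbf{A}$, using three structural facts. First, on its reserved (even) slots the attacker's jobs have priority, so a job issued on an even slot departs on that very slot and contributes nothing beyond what $\mathbf{A}$ already encodes. Second, a job issued on an odd slot $t$ is delayed by exactly one slot iff the user's buffer is non-empty after $\delta_t$, and otherwise departs immediately; thus the attacker observes at most a sub-sampling, read only at odd times, of the ``user busy'' indicator process. This can be made precise by a data-processing / Markov-chain reduction in the style of the FCFS converse~\eqref{eq:temp_6}, after which Theorem~\ref{theorem:periodic_sampling} caps the per-slot content of this odd-time sample, its maximizing sampling rate being at most $\frac{1}{2}$; using $\lim_{\lambda\to0}\alpha_{1-\lambda}=0$ from~\eqref{eq:alpha} one checks this residual term is $o(H(\lambda))$ relative to the leading $-\lambda\log\lambda$. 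Third, an even-slot arrival $\delta_{2k}$ can influence $\mathbf{D}$ only when the attacker's buffer happens to be non-empty at slot $2k$, an event of steady-state frequency $O(\lambda)$, so the even-slot arrivals still carry $\frac{n}{2}H(\lambda)(1-o(1))$ bits of residual uncertainty given $(\mathbf{A}^m,\mathbf{D}^m)$; subtracting this from $nH(\lambda)$ yields the claimed bound, hence $\limsup_{\lambda\to0}\mathcal{R}_{WC-TDMA}(\lambda)\le\frac{1}{2}$. Combining the two directions proves the corollary.

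The main obstacle is the third fact. In FCFS the queue length a probe sees is a clean running sum of arrivals, independent of how the attacker probes; in WC-TDMA the user's queue---and therefore the visibility of the even-slot arrivals---is coupled to the attacker's own probing pattern, because odd-slot probes are precisely what deny the user the even slots. One must therefore argue uniformly over all strategies that probing densely enough to raise the odd-slot leakage simultaneously spills enough user service onto even slots to keep those arrivals hidden, pinning the total at $\frac{1}{2}H(\lambda)$ in the limit.
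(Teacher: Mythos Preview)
Your lower-bound half is fine and matches the paper exactly: divide Theorem~\ref{theo:nitdma} by $H(\lambda)$ and reuse the estimate~\eqref{eq:temp_81}.

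Your converse, however, has genuine gaps. First, the claim ``a job issued on an even slot departs on that very slot'' is false whenever the attacker already has earlier jobs queued; what is true is that the attacker job at the \emph{head} of the attacker's queue departs in any even slot. This is precisely why the paper does not work with the arrival times $A_k$ but with $S_k=\max\{D_{k-1},A_k\}$, the first slot at which the $k$th job can possibly be served. With this reparametrization one gets $I(\boldsymbol{\delta}^n;\mathbf{A}^m,\mathbf{D}^m)=I(\boldsymbol{\delta}^n;\mathbf{S}^m,\mathbf{D}^m)\le H(D_1-S_1,\ldots,D_m-S_m)$, and now $D_k-S_k\equiv1$ whenever $S_k$ is even, so only odd $S_k$ contribute. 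Second, your invocation of Theorem~\ref{theorem:periodic_sampling} is illegitimate: that theorem bounds the entropy of samples of a Bernoulli \emph{counting} process, whereas the ``user busy'' indicator you want to sample is neither Bernoulli nor memoryless. Third, your accounting is inconsistent: you say the odd-slot contribution is $o(H(\lambda))$, but if that were so the ratio would tend to $0$, not $\tfrac12$; in fact the odd-slot terms are exactly where the $\tfrac12 H(\lambda)$ comes from.

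The paper's route avoids all of this, and in particular avoids your ``third fact'' entirely. After the $S_k$ reduction, for odd $S_k$ one has $D_k-S_k\in\{1,2\}$ with $D_k-S_k=1$ iff $q(S_k)=0$ and $\delta_{S_k}=0$, so $H(D_k-S_k)=H\bigl(\mathbb{P}(q(S_k)=0)(1-\lambda)\bigr)$. A short induction (Appendix~\ref{app:eq_temp_13}) shows $\mathbb{P}(q(S_k)=0)\to1$ as $\lambda\to0$ for every $k$, whence $H(D_k-S_k)\to H(\lambda)$. Since at most $n/2$ of the $S_k$'s can be odd in a window of length $n$, summing gives $I(\boldsymbol{\delta}^n;\mathbf{A}^m,\mathbf{D}^m)\le \tfrac{n}{2}H(\lambda)$ in the limit, uniformly over attack strategies. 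No argument about even-slot arrivals being ``hidden'' is needed.
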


\begin{proof}
Taking the  limit of~\eqref{eq:nitdma_lb} at $\lambda\to0$,  we get
\begin{equation}
\begin{aligned}
\underset{\lambda\to0}{\lim}\text{ } \mathcal{R}_{WC-TDMA}(\lambda)&\geq  \underset{\lambda\to0}{\lim} \frac{1-2\lambda}{2-2\lambda}  \frac{H\left(B\right)}{H(\lambda)}\
\overset{(a)}{\geq}\frac{1}{2},
\end{aligned}
\label{eq:wctdma_r_lb}
\end{equation}
where $(a)$ follows from~\eqref{eq:temp_81}.

We subsequently derive an upper-bound on the leakage ratio. 
Define $S_k$ to be the earliest time when the $k^{th}$ attacker's job can receive service. Then  $S_k=\max\{D_{k-1}, A_k\}$, and we have 
 \begin{equation}
\begin{aligned}
I\left(\boldsymbol{\delta}^{n}; \mathbf{A}^{m},\mathbf{D}^{m}\right)&=I\left(\boldsymbol{\delta}^{n}; \mathbf{S}^{m},\mathbf{D}^{m}\right)\\
&\leq H\left(\mathbf{S}^{m},\mathbf{D}^{m}\right)\\
&=H\left(D_1-S_1, \cdots, D_m-S_m \right)\\
&\overset{(b)}{\leq} \underset{k:S_k \text{ is odd}}{\sum}H\left(D_k-S_k\right)
 \end{aligned}
  \label{eq:temp_11}
\end{equation}
where $(b)$ follows from the fact that 
the attacker is served with priority in even time slots, i.e., 
 $D_k - S_k \equiv 1$, where  $S_k$ is even.
 On the other hand, since the user is served first during the odd time slots, 
similar to~\eqref{eq:nitdma_q} we have
\begin{equation}
\begin{aligned}
D_k- S_k=\begin{cases} 1  \quad \text{if }q(S_k) =0  \text{ and }\delta_{S_k} =0, \\2 \quad \text{otherwise.} \end{cases}
\label{eq:temp_10}
\end{aligned}
\end{equation} 

Applying~\eqref{eq:temp_10} to~\eqref{eq:temp_11} and taking the limit  at $\lambda \to 0$, we have
\begin{equation}
\begin{aligned}
\underset{\lambda\to0}{\lim}\text{ } I\left(\boldsymbol{\delta}^{n}; \mathbf{A}^{m},\mathbf{D}^{m}\right)&
\leq  \underset{k:S_k \text{ is odd}}{\sum} \underset{\lambda\to0}{\lim} \text{ }
H\left(\mathbb{P}\left(q(S_k) =0\right)\cdot(1-\lambda)\right)\\
&\overset{(c)}{=}   \underset{k:S_k \text{ is odd}}{\sum} H\left(\lambda\right),
\end{aligned}
\label{eq:temp_12}
\end{equation}
where $(c)$ holds because
\begin{equation}
\begin{aligned}
\underset{\lambda\to0}{\lim}   \mathbb{P}\left(q(S_k) =0\right) = 1, \quad k \in\mathbb{Z}.
\end{aligned}
\label{eq:temp_13}
\end{equation}
See Appendix~\ref{app:eq_temp_13} for the proof of~\eqref{eq:temp_13}.

Plugging~\eqref{eq:temp_13} into~\eqref{eq:temp_12},  we have 
 \begin{equation}
\begin{aligned}
\underset{\lambda\to0}{\lim}\text{ } \mathcal{R}_{WC-TDMA}(\lambda) &\leq \underset{\mathbf{S}:\underset{k\to\infty}{\lim}\frac{k}{{S}_k}<1-\lambda}{\max}\quad\underset{n\to\infty}{\lim}\frac{\underset{k:S_k \text{ is odd}}{\sum}H\left(D_k-S_k\right)}{nH(\lambda)}\\
 &\overset{(d)}{\leq}  \frac{1}{2}.
\end{aligned}
\label{eq:temp_14}
\end{equation}
 where $(d)$ holds because at most half of time slots are odd. 
\eqref{eq:temp_14}  together with \eqref{eq:wctdma_r_lb} prove~\eqref{eq:wctdma_cor}. 
\end{proof}

\subsection{Deterministic Work-Conserving Schedulers}

 In this section, we derive a universal lower bound on the information leakage for the class of deterministic work-conserving (det-WC) schedulers, where the scheduler's actions are deterministic (the same arrival instances result in the same departure events), and the scheduler can idle only if there are no jobs in the queue.

\begin{theorem}
The information leakage of 
 a det-WC scheduler serving  a user and an attacker is lower-bounded by
\begin{equation}
\begin{aligned}
\mathcal{L}_{det-WC}(\lambda) \geq  \underset{\omega:\omega<1-\lambda}{\max} \frac{\omega(z_0-1)}{2z_0} H(\lambda),
\end{aligned}
\label{eq:det-wc_main}
\end{equation}
where  $\lambda$ is the user's arrival rate, and $z_0$ is the only root of the equation
\begin{equation}
\begin{aligned}
\alpha_{\omega}z(1-\lambda+\lambda z)^{\left\lceil\frac{1}{\omega}\right\rceil} +
(1-\alpha_{\omega})z^2(1-\lambda+\lambda z)^{\left\lfloor\frac{1}{\omega}\right\rfloor} -z^{\left\lceil\frac{1}{\omega}\right\rceil} =0, 
\end{aligned}
\label{eq:z_0}
\end{equation} 
outside the unit circle.
%where $\alpha_\omega=\frac{\left\lceil \frac{1}{\omega} \right\rceil-\frac{1}{\omega}}{\left\lceil \frac{1}{\omega} \right\rceil-\left\lfloor \frac{1}{\omega} \right\rfloor}$. 
\label{theo:detwc}
\end{theorem}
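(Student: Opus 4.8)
The plan is to fix an arbitrary det-WC scheduler, exhibit a single family of attack strategies indexed by a rate $\omega<1-\lambda$, lower bound the leakage each one forces, and then take the best $\omega$. The strategy is the periodic-sampling attack of Theorem~\ref{theorem:periodic_sampling}: the attacker issues jobs with inter-arrival gaps equal to $\lfloor 1/\omega\rfloor$ with probability $\alpha_{\omega}$ and $\lceil 1/\omega\rceil$ otherwise, so the long-run probe rate is $\omega<1-\lambda$, and he remains idle between sending a job and receiving it back. Exactly as in the LQF and round-robin arguments \eqref{eq:lqf_lb} and \eqref{eq:temp_8_0}, determinism of the scheduler makes the departure sequence a function of $\boldsymbol{\delta}^{n}$ once the (independently chosen) probe pattern is fixed, so $I(\boldsymbol{\delta}^{n};\mathbf{A}^{m},\mathbf{D}^{m})=H(\mathbf{D}^{m}\mid\mathbf{A}^{m})$ and it suffices to lower bound the entropy rate of the departure stream the attacker sees.

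The crucial structural step is to show that work-conservation forces these probes to expose the epochs at which the total (superposition) queue empties. Since the server never idles while a job is present, the superposition queue length obeys $q(t+1)=(q(t)-1)^{+}+a(t+1)$ regardless of the discipline, where $a(t+1)\in\{0,1,2\}$ counts the arrivals in slot $t+1$; in particular a job that is momentarily alone in the system is served in its own arrival slot, which lets the attacker recognize, from his queueing-delay sequence $D_k-A_k$, a rich-enough set of slots at which $q(\cdot)=0$ (this is the discipline-independent analogue of \eqref{eq:nitdma_q}). Those detected emptiness epochs partition time into consecutive \emph{busy periods} $B'_1,B'_2,\dots$ in the sense of \eqref{eq:nitdma_busy_def}, giving $H(\mathbf{D}^{m}\mid\mathbf{A}^{m})\ge H(B'_1,\dots,B'_{m'})$ with $m'$ the number of busy periods completed by time $n$. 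I would then run the standard first-gap/branching decomposition of a busy period — the user jobs arriving in the initial probing gap (length $\lfloor 1/\omega\rfloor$ w.p. $\alpha_{\omega}$ or $\lceil 1/\omega\rceil$ w.p. $1-\alpha_{\omega}$, each gap generating $\mathrm{Binomial}(\cdot,\lambda)$ sub-arrivals with generating function $(1-\lambda+\lambda z)^{\lfloor 1/\omega\rfloor}$ or $(1-\lambda+\lambda z)^{\lceil 1/\omega\rceil}$) recursively spawning sub-busy-periods — to conclude that the $B'_r$ are i.i.d. with generating function $\beta(z)=\mathbb{E}[z^{B'}]$ whose radius of convergence is precisely the unique root $z_0$ of \eqref{eq:z_0} outside the unit circle; this pins down $\mathbb{E}[B']$ and the exponential tail of $B'$ in terms of $z_0$.

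Finally, because the $B'_r$ are i.i.d., the attacker's observed entropy rate is $H(B')/\mathbb{E}[B']$, and the theorem reduces to a lower bound on $H(B')$. Here I would combine the $z_0$-controlled tail of $B'$ with the same work-conserving-TDMA-type masking that produced the factor $\tfrac12$ in \eqref{eq:temp_11}--\eqref{eq:temp_12} — within a busy period the attacker can disambiguate the user's arrivals on at most, and for the worst discipline essentially exactly, half of the slots — to obtain $H(B')\ge \tfrac12\,\frac{\omega(z_0-1)}{z_0}\,\mathbb{E}[B']\,H(\lambda)$, hence $\mathcal{L}_{det\text{-}WC}(\lambda)\ge \frac{\omega(z_0-1)}{2z_0}H(\lambda)$; maximizing over $\omega<1-\lambda$ gives \eqref{eq:det-wc_main}. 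The main obstacle is the second paragraph: unlike the four named policies, a general det-WC scheduler may permute service arbitrarily, so proving that the periodic probes still certify enough emptiness epochs for \emph{every} scheduler in the class, and that the induced busy periods are genuinely i.i.d. with exactly the generating function behind \eqref{eq:z_0}, is where the real work lies; by comparison, passing from the busy-period fixed-point equation to the root $z_0$, the mean $\mathbb{E}[B']$, and the constant $\frac{\omega(z_0-1)}{2z_0}$ is routine singularity analysis and Lagrange inversion.
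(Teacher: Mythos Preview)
Your route diverges from the paper's at the first substantive step, and the obstacle you yourself flag in the last paragraph is genuine and is not overcome by anything in the proposal. For an arbitrary det-WC discipline there is no reason the delay sequence $D_k-A_k$ should certify the epochs at which the superposition queue empties: a scheduler that, whenever an attacker probe is present, serves it ahead of any user job returns delay $1$ on every probe regardless of whether $q(A_k)=0$, so the departure stream is constant and your inequality $H(\mathbf{D}^m\mid\mathbf{A}^m)\ge H(B'_1,\dots,B'_{m'})$ degenerates to $0\ge 0$. Even under the symmetry assumption implicit in the paper (the scheduler cannot tell which flow is malicious), you would still have to show that \emph{every} symmetric det-WC discipline exposes enough emptiness epochs to reconstruct the $B'_r$, and you defer this entirely. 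The downstream claim $H(B')\ge \tfrac12\cdot\frac{\omega(z_0-1)}{z_0}\,\mathbb{E}[B']\,H(\lambda)$ is separately unsupported: neither the $z_0$-tail of $B'$ nor any ``masking'' argument produces an inequality of that shape, and the justification you give conflates the entropy of the busy-period \emph{length} with the information the attacker extracts \emph{within} a busy period, which are different objects.

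The paper sidesteps the observability problem altogether by never asking whether the attacker can detect emptiness. It applies the chain rule on the \emph{user} side, $I=\sum_i\bigl(H(\delta_i)-H(\delta_i\mid\boldsymbol{\delta}^{i-1},\mathbf{A}^{*m},\mathbf{D}^{*m})\bigr)$, and uses only the discipline-free fact that for any work-conserving server the total queue trajectory $q(1),\dots,q(i)$ is a deterministic function of $(\boldsymbol{\delta}^{i-1},\mathbf{A}^{*m})$; this lets one swap $\boldsymbol{\delta}^{i-1}$ for the queue history in the conditioning. On the event $\{q(i)=0,\ a_i=1,\ \text{user served first in slot }i\}$ the delay of that probe disambiguates $\delta_i$, so the corresponding conditional entropy term vanishes; since the scheduler cannot distinguish the two flows, the user is served first in at least half of such ties, giving the factor $\tfrac12$. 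The quantity $(z_0-1)/z_0$ then enters not as a busy-period singularity but simply as the stationary probability $\mathbb{P}\bigl(q(A_k^*)=0\bigr)$ of the discipline-independent recursion \eqref{eq:temp_18}, read off from its $z$-transform. Multiplying the three factors $\tfrac12$, $\omega$, and $(z_0-1)/z_0$ against $H(\lambda)$ yields \eqref{eq:det-wc_main} directly, with no busy-period decomposition and no branching analysis.
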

\begin{proof}
We again consider the periodic-sampling attack, $\mathbf{{A}}^*$, defined in~\eqref{eq:attack_strategy_1}, under which  the mutual information between the attacker's observation and the user's arrival pattern is given by
\begin{equation}
\begin{aligned}
I\left(\boldsymbol{\delta}^{n}; \mathbf{A}^{*m},\mathbf{D}^{*m}\right)
&\overset{(a)}{=} H\left(\boldsymbol{\delta}^{n}\right)- \sum_{i=1}^{n}H\left(\delta_i\big|\boldsymbol{\delta}^{i-1},{\mathbf{A}}^{{*m}},{\mathbf{D}}^{{*m}}\right)\\
&\overset{(b)}{=}H\left(\boldsymbol{\delta}^{n}\right)- \sum_{i=1}^{n}H\left(\delta_i\big|q\left(1\right), \cdots, q\left(i\right), {\mathbf{A}}^{{*m}},{\mathbf{D}}^{{*m}}\right)\\
&\geq H\left(\boldsymbol{\delta}^{n}\right)- \sum_{i=1}^{n}H\left(\delta_i\big|q\left(i\right), {\mathbf{A}}^{{*m}}\right)
\end{aligned}
\label{eq:temp_15_2}
\end{equation}
where $(a)$ follows from the entropy chain rule. 
$(b)$ holds because 
given $\boldsymbol{\delta}^{i-1}$ and ${\mathbf{A}}^{{*m}}$,  queue lengths at all time slots up to time $i$ are known. 

Define $a_i$ as the indicator of the attacker's arrival event in time slot $i$. Following~\eqref{eq:attack_strategy_1}, $a_i$'s are i.i.d. distributed as 
\begin{equation}
\begin{aligned}
a_i=\begin{cases}1 \quad \text{w.p. } \omega \\
0 \quad \text{w.p. } 1-\omega
\end{cases}, \quad i\in\mathbb{Z}.
\end{aligned}
\label{eq:temp_15}
\end{equation}
Applying~\eqref{eq:temp_15} to~\eqref{eq:temp_15_2}, we  have
\begin{equation}
\begin{aligned}
I\left(\boldsymbol{\delta}^{n}; \mathbf{A^*}^{m},\mathbf{D^*}^{m}\right)
&\geq H\left(\boldsymbol{\delta}^{n}\right)- \sum_{i=1}^{n}H\left(\delta_i\big|q\left(i\right), a_i\right)
\end{aligned}
\label{eq:temp_15_1}
\end{equation}

At some time slot $i$, assume the scheduler serves the user first when a tie of queue length occurs.
Additionally, assume the queue is empty, i.e., $q\left(i\right)=0$, and the attacker issues  a new job, i.e., $a_i=1$. 
As argued before, in this case the attacker knows whether the user issued a job or not in this slot without ambiguity. 
Since the scheduler does not know the users' identities, in the worst case, the user has  priority  for at least half of time slots.
Therefore, the afore mentioned scenario would at least occur with probability by $\frac{1}{2}\omega\mathbb{P}\left(q(i)=0\right)$.  Plugging this  into~\eqref{eq:temp_15_1}, we have
\begin{equation}
\begin{aligned}
I\left(\boldsymbol{\delta}^{n}; \mathbf{A^*}^{m},\mathbf{D^*}^{m}\right)
&\geq H\left(\boldsymbol{\delta}^{n}\right)- \sum_{i=1}^{n}  \left(1-\frac{1}{2}\omega  \mathbb{P}\left(q(i)=0\right)\right)H\left(\delta_i\right)\\
&= \frac{1}{2} \omega  \mathbb{P}\left(q(i)=0\right)H\left(\lambda\right).
\end{aligned}
\label{eq:temp_16}
\end{equation}
We next derive the probability that the attacker sees an empty queue, $ \mathbb{P}\left(q(i)=0\right)$.
% The queue lengths sampled by the attacker evolve as 
%\begin{equation}
%\begin{small}
%\begin{aligned}
%q\left({A}_{k+1}\right)=\left(q\left({A}_{k+1}\right) + 1+ \sum_{i={A}_{k}}^{{A}_{k+1}-1}\delta_i- \left({A}_{k+1}-{A}_k\right)\right)_+,
%\end{aligned}
%\end{small}
%\label{eq:temp_17}
%\end{equation}
From~\eqref{eq:attack_strategy_1}, we can write the update equation for queue lengths sampled by the attacker as
\begin{equation}
\begin{aligned}
q\left({A}^*_{k+1}\right)=\left(q\left({A}^*_{k+1}\right) + Y_k - \left\lceil \frac{1}{\omega} \right\rceil \right)_+, \quad k\in\mathbb{Z},
\end{aligned}
\label{eq:temp_18}
\end{equation}
where $Y_k$'s are i.i.d. distributed as 
\begin{equation}
\begin{aligned}
Y_k=\begin{cases} 2+X^*_k  & \mbox{w.p. } \alpha_\omega \\ 1+{X^*}'_k & \mbox{w.p. } 1-\alpha_\omega \end{cases},
\end{aligned}
\end{equation}
where $X^*_k\sim Binomial\left(\left\lfloor \frac{1}{\omega} \right\rfloor,\lambda\right)$, ${X^*}'_k\sim Binomial\left(\left\lceil \frac{1}{\omega} \right\rceil ,\lambda\right)$. 
%By calculating the $z$-transform of $Y_k$, we can derive the $z$-transform of $q\left({A}_{k+1}\right)$ in the steady state~\cite[(3)]{bruneel86}, from which 
The probability of the attacker seeing an empty queue can be derived by calculating $z$-transform of $q\left({A}_{k+1}\right)$ in the steady state~\cite[(3)]{bruneel86}, which is given by
\begin{equation}
\begin{aligned}
\underset{k\to\infty}{\lim}\mathbb{P}\left(q\left({A}^*_{k+1}\right)=0\right) = \frac{z_0-1}{z_0},
\end{aligned}
\label{eq:temp_19}
\end{equation}
where $z_0$ is the only root of $\alpha_{\omega}z(1-\lambda+\lambda z)^{\left\lceil\frac{1}{\omega}\right\rceil} +
(1-\alpha_{\omega})z^2(1-\lambda+\lambda z)^{\left\lfloor\frac{1}{\omega}\right\rfloor} -z^{\left\lceil\frac{1}{\omega}\right\rceil} =0$ outside  the unit circle. 

\eqref{eq:temp_16} and~\eqref{eq:temp_19} readily imply~\eqref{eq:det-wc_main}.
\end{proof}

We plot the numerical solution of the bound in~\eqref{eq:det-wc_main} in Figure~\ref{fig:it_summary}. 
As can be seen,  the attacker always gains a significant amount of information of the user in a det-WC scheduler, especially when the user's rate $\lambda$ is below $0.5$. 
More specifically, as $\lambda\to 0$,
  the user's traffic pattern is leaked for at least half of the time.

\section{Conclusion}
Timing side channels in deterministic work-conserving schedulers were studied, where information leakage happens due to the sharing of 
queue among users. 
Our analysis proved that commonly deployed work-conserving queue service policies all  leak significant amounts of user's information. 
If the scheduler  adds idling slots, or randomizes the service oder of jobs to some extent, the attacker may not be able to make accurate inferences about queue states any more and subsequently the user's arrival pattern.
Such mitigation measures for the timing side channel of our interest remain open. 

\newpage
\appendix

\subsection{Proof of Theorem~\ref{theorem:periodic_sampling}}
\label{app:opt_samp}

%In this section, we study the optimal sampling strategy that preserves the most information of a Bernoulli arrival process $\boldsymbol{\delta}=\{\delta_1,\delta_2,\cdots\}$, where $\delta_i\sim Bernoulli(\lambda), i \in \mathbb{Z}$.
%Define the function $N(t)$ counting the total number of arrivals in a random arrival process by time $t$, and 
%\begin{enumerate}
%
%\item \begin{small}$N(0)=0$\end{small};
%\item \begin{small}$N(t) = N(t-1)+\delta_t$\end{small}, for \begin{small}$t\in \mathbb{Z}$\end{small}.  
%\end{enumerate}
%We next prove that for a fixed sampling rate $\omega$, the information (entropy rate) of  the sampled process  is maximized when the {periodic-sampling} strategy is used, where samples are taken according to the time sequence  $\mathbf{A}=\{A_1, A_2, \cdots\}$ satisfying 
%\begin{equation}
%\begin{small}
%\begin{aligned}
%A_{i}-A_{i-1}=\begin{cases} \left\lfloor\frac{1}{\omega}\right\rfloor & \mbox{w.p. } \alpha_\omega \\ \left\lceil \frac{1}{\omega} \right\rceil & \mbox{w.p. } 1-\alpha_{\omega}  \end{cases}, \quad i\in \mathbb{Z},
%\end{aligned}
%\end{small}
%\label{eq:attack_strategy_1}
%\end{equation} 
%where  $\alpha_\omega=\frac{\left\lceil \frac{1}{\omega} \right\rceil-\frac{1}{\omega}}{\left\lceil \frac{1}{\omega} \right\rceil-\left\lfloor \frac{1}{\omega} \right\rfloor}$. 
Before proving Theorem~\ref{theorem:periodic_sampling}, we introduce two lemmas. 

Define function $\mathcal{H}_{\lambda}$: $\mathbb{Z^{+}} \to \mathbb{R}\cup\{+\infty\}$ as 
\begin{equation}
\begin{aligned}
\mathcal{H}_{\lambda}(i)= H\left(\delta_1,\delta_2,\cdots,\delta_i\Big|\sum_{j=1}^{i}\delta_j\right),
\end{aligned}
\label{eq:def_f}
\end{equation}
where $\delta_i$'s are i.i.d. Bernoulli random variables.

\begin{lemma} $\mathcal{H}_{\lambda}(\cdot)$ is a mid-point convex function~\cite[(2.8)]{Murota}; i.e., 
\begin{equation}
\begin{aligned}
\mathcal{H}_{\lambda}(a)+\mathcal{H}_{\lambda}(b)\geq \mathcal{H}_{\lambda}\left(\left\lfloor \frac{a+b}{2} \right\rfloor\right) +\mathcal{H}_{\lambda}\left(\left\lceil \frac{a+b}{2} \right\rceil\right), 
\end{aligned}
\end{equation}
for all $a\leq b, \;   a, b \in\mathbb{Z^{+}}$, and the equality is achieved if $b = a \text{ or } b=a+1$. 
\label{lem:mid}
\end{lemma}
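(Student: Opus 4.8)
\emph{Proof proposal.} The plan is to reduce the statement to a monotonicity property of a single one-dimensional function and then establish that property by a short averaging argument. First I would rewrite $\mathcal{H}_\lambda$ in terms of a binomial entropy: writing $S_i=\sum_{j=1}^{i}\delta_j\sim\mathrm{Bin}(i,\lambda)$, and using that $S_i$ is a function of $\delta^i$ while the $\delta_j$ are i.i.d.,
\[
\mathcal{H}_\lambda(i)=H(\delta^i\mid S_i)=H(\delta^i)-H(S_i)=iH(\lambda)-H(S_i).
\]
Fix $a\le b$ in $\mathbb{Z}^{+}$ and set $m_1=\lfloor\frac{a+b}{2}\rfloor$, $m_2=\lceil\frac{a+b}{2}\rceil$, so $m_1+m_2=a+b$ and $a\le m_1\le m_2\le b$. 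Since $i\mapsto iH(\lambda)$ is additive and $a+b=m_1+m_2$, the inequality $\mathcal{H}_\lambda(a)+\mathcal{H}_\lambda(b)\ge\mathcal{H}_\lambda(m_1)+\mathcal{H}_\lambda(m_2)$ is equivalent to $H(S_a)+H(S_b)\le H(S_{m_1})+H(S_{m_2})$, and the two equality cases $b=a$ and $b=a+1$ are immediate because then $\{m_1,m_2\}=\{a,b\}$.

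Next I would introduce a coupling. Let $\delta_1,\dots,\delta_N$ be i.i.d.\ $\mathrm{Bern}(\lambda)$ with $N=a+b$ and $S_k=\sum_{j=1}^{k}\delta_j$. Sums over complementary blocks of indices are independent with complementary binomial laws, so $H(S_a)+H(S_b)=H(S_a)+H(S_N-S_a)=H(S_a,S_N)=H(S_a\mid S_N)+H(S_N)$, and likewise $H(S_{m_1})+H(S_{m_2})=H(S_{m_1}\mid S_N)+H(S_N)$. Hence it suffices to prove $H(S_a\mid S_N)\le H(S_{m_1}\mid S_N)$, and by telescoping over $k=a,a+1,\dots,m_1-1$ it suffices to prove $H(S_k\mid S_N)\le H(S_{k+1}\mid S_N)$ for each such $k$. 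Applying the chain rule to $H(S_k,\delta_{k+1}\mid S_N)$ in the two orders, using $S_k=S_{k+1}-\delta_{k+1}$, gives
\[
H(S_{k+1}\mid S_N)-H(S_k\mid S_N)=H(\delta_{k+1}\mid S_k,S_N)-H(\delta_{k+1}\mid S_{k+1},S_N).
\]
A short computation of the two conditional laws (the first from the fact that, given $S_k$ and $S_N$, the $S_N-S_k$ remaining ones are spread uniformly over the $N-k$ untouched slots; the second from exchangeability of $\delta_1,\dots,\delta_{k+1}$ together with $\delta_{k+1}\perp S_N\mid S_{k+1}$) identifies the right-hand side as $g(N-k)-g(k+1)$, where $g(n):=\mathbb{E}\bigl[\,h\bigl(\tfrac1n\,\mathrm{Bin}(n,\lambda)\bigr)\,\bigr]$ and $h$ is the binary entropy function. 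Since $N-k\ge k+1$ along the telescope, everything reduces to showing that $g$ is non-decreasing.

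The step I expect to be the crux is this monotonicity of $g$, although it admits a quick proof: take i.i.d.\ $\mathrm{Bern}(\lambda)$ variables $\delta_1,\dots,\delta_{n+1}$, let $\bar X_{n+1}=\frac1{n+1}\sum_{i=1}^{n+1}\delta_i$, and for each $j$ let $Y_j$ be the average of the other $n$ variables, so $Y_j$ is distributed as $\frac1n\mathrm{Bin}(n,\lambda)$. The algebraic identity $\frac1{n+1}\sum_{j=1}^{n+1}Y_j=\bar X_{n+1}$ exhibits $\bar X_{n+1}$ as a uniform average of $n+1$ copies of the $n$-sample mean; concavity of $h$ gives $h(\bar X_{n+1})\ge\frac1{n+1}\sum_j h(Y_j)$, and taking expectations yields $g(n+1)\ge g(n)$. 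Feeding $g(N-k)\ge g(k+1)$ back through the reductions proves the lemma. The remaining points are purely bookkeeping: verifying that every $k$ in the telescope satisfies $k+1\le N-k$ (which follows from $m_1=\lfloor N/2\rfloor$) and checking the two conditional-independence facts used to evaluate the conditional entropies.
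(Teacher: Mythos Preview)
Your argument is correct. The reductions are all sound: the identity $\mathcal{H}_\lambda(i)=iH(\lambda)-H(S_i)$, the coupling trick $H(S_a)+H(S_b)=H(S_a\mid S_N)+H(S_N)$, the chain-rule computation that yields $H(S_{k+1}\mid S_N)-H(S_k\mid S_N)=g(N-k)-g(k+1)$, and the jackknife proof of the monotonicity of $g(n)=\mathbb{E}\bigl[h\bigl(\tfrac{1}{n}\mathrm{Bin}(n,\lambda)\bigr)\bigr]$ all check out, as does the range condition $k+1\le N-k$ along the telescope.

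However, the paper takes a shorter and rather different route. Instead of passing to binomial entropies, it works directly with $\mathcal{H}_\lambda$ and proves the closed-form identity
\[
\mathcal{H}_\lambda(m)-\mathcal{H}_\lambda(n)=H\!\left(\boldsymbol{\delta}^{\,m-n}\,\Big|\,\textstyle\sum_{i=1}^{m}\delta_i\right)\qquad(m\ge n),
\]
using only the chain rule and the fact that $\sum_{i=1}^{n}\delta_i$ is a sufficient statistic for $\boldsymbol{\delta}^{n}$. Applying this identity twice, with $(m,n)=\bigl(b,\lceil\tfrac{a+b}{2}\rceil\bigr)$ and $(m,n)=\bigl(\lfloor\tfrac{a+b}{2}\rfloor,a\bigr)$, and subtracting, the target quantity collapses to a single conditional mutual information $I\bigl(\boldsymbol{\delta}^{\,\lfloor(a+b)/2\rfloor-a};\sum_{i\le\lfloor(a+b)/2\rfloor}\delta_i\,\big|\,\sum_{i\le b}\delta_i\bigr)\ge 0$, with equality transparent when $b\in\{a,a+1\}$. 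So the paper avoids the telescoping and the auxiliary monotonicity lemma entirely. Your approach is longer but has the virtue of isolating a standalone fact of independent interest---the mid-point concavity of $n\mapsto H(\mathrm{Bin}(n,\lambda))$, equivalently the monotonicity of $g$---and gives a clean probabilistic proof of it; the paper's approach is more direct for the lemma at hand.
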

\begin{proof}
Given $m, n\in \mathbb{Z^{+}}$, $m\geq n$, we compute 
\begin{equation}
\begin{aligned}
\mathcal{H}_{\lambda}(m)-\mathcal{H}_{\lambda}\left(n\right)=&H\left(\boldsymbol{\delta}^{n}, \boldsymbol{\delta}_{n+1}^{m}\Big|\sum_{i=1}^{m}\delta_i\right)-H\left(\boldsymbol{\delta}^{n}\Big|\sum_{i=1}^{n}\delta_i\right)\\
\overset{(a)}{=}&H\left( \boldsymbol{\delta}_{n+1}^{m}\Big|\sum_{i=1}^{m}\delta_i\right)+H\left(\boldsymbol{\delta}^{n}\Big| \boldsymbol{\delta}_{n+1}^{m},\sum_{i=1}^{m}\delta_i\right)-H\left(\boldsymbol{\delta}^{n}\Big|\sum_{i=1}^{n}\delta_i\right)\\
\overset{(b)}{=}&H\left(\boldsymbol{\delta}_{n+1}^{m}\Big|\sum_{i=1}^{m}\delta_i\right)\\
\overset{(c)}{=}&H\left(\boldsymbol{\delta}^{m-n}\Big|\sum_{i=1}^{m}\delta_i\right)
\end{aligned}
\label{eq:1}
\end{equation}
where $(a)$ applies the entropy chain rule~\cite[Theorem~2.2.1]{Cover&Thomas:91}, $(b)$ holds because 
that  $\sum_{i=1}^{n}\delta_i$ is a sufficient statistic to infer $\boldsymbol{\delta}^n$ and
can be calculated from $\boldsymbol{\delta}_{n+1}^m$ and $\sum_{i=1}^{m}\delta_i$,  and $(c)$ follows from the fact that Bernoulli arrivals are uniformly distributed once the total number of arrivals is known. 

Replacing $(m, n)$ with $\left(b,\left\lceil \frac{a+b}{2} \right\rceil\right)$ and $\left(\left\lfloor\frac{a+b}{2}\right\rfloor, a\right)$  in~\eqref{eq:1}, respectively, we get 
\begin{equation}
\begin{aligned}
\mathcal{H}_{\lambda}(b)-\mathcal{H}_{\lambda}\left(\left\lceil \frac{a+b}{2} \right\rceil\right)
=H\left(\boldsymbol{\delta}^{\left\lfloor\frac{a+b}{2}\right\rfloor-a}\Big|\sum_{i=1}^{b}\delta_i\right)
\label{eq:2}
\end{aligned}
\end{equation}
and
\begin{equation}
\begin{aligned}
\mathcal{H}_{\lambda}\left(\left\lfloor\frac{a+b}{2}\right\rfloor\right)-\mathcal{H}_{\lambda}\left(a\right)
=H\left(\boldsymbol{\delta}^{\left\lfloor\frac{a+b}{2}\right\rfloor-a}\Big|\sum_{i=1}^{\left\lfloor\frac{a+b}{2}\right\rfloor}\delta_i\right).
\label{eq:3}
\end{aligned}
\end{equation}

Subtracting~\eqref{eq:3} from~\eqref{eq:2}, we get
\begin{equation}
\begin{aligned}
\mathcal{H}_{\lambda}(b)+\mathcal{H}_{\lambda}\left(a\right)
-\mathcal{H}_{\lambda}\left(\left\lceil \frac{a+b}{2} \right\rceil\right)
-\mathcal{H}_{\lambda}\left(\left\lfloor\frac{a+b}{2}\right\rfloor\right)
=&H\left(\boldsymbol{\delta}^{\left\lfloor\frac{a+b}{2}\right\rfloor-a}\Big|\sum_{i=1}^{b}\delta_i\right)
-H\left(\boldsymbol{\delta}^{\left\lfloor\frac{a+b}{2}\right\rfloor-a}\Big|\sum_{i=1}^{\left\lfloor\frac{a+b}{2}\right\rfloor}\delta_i\right)\\
\overset{(d)}{=}&H\left(\boldsymbol{\delta}^{\left\lfloor\frac{a+b}{2}\right\rfloor-a}\Big|\sum_{i=1}^{b}\delta_i\right)-\
H\left(\boldsymbol{\delta}^{\left\lfloor\frac{a+b}{2}\right\rfloor-a}\Big|\sum_{i=1}^{\left\lfloor\frac{a+b}{2}\right\rfloor}\delta_i,\sum^{b}_{i=1}\delta_i\right)\\
= & I\left(\boldsymbol{\delta}^{\left\lfloor\frac{a+b}{2}\right\rfloor-a}; \sum_{i=1}^{\left\lfloor\frac{a+b}{2}\right\rfloor}\delta_i\Big|\sum_{i=1}^{b}\delta_i\right)
\geq 0
\end{aligned}
\label{eq:l3}
\end{equation}
where $(d)$ follows from that fact that $\sum_{i=\left\lfloor\frac{a+b}{2}\right\rfloor+1}^{b}\delta_i$  does not provide extra information to infer $\boldsymbol{\delta}^{\left\lfloor\frac{a+b}{2}\right\rfloor-a}$.
Clearly, the last inequality turns equality if $b=a \text{ or }a+1$, which completes the proof. 
\end{proof}

\begin{lemma}
$\mathcal{H}_{\lambda}(\cdot)$ is an integrally-convex function~\cite[(2.5)]{Murota}; i.e., it can be extended to a globally convex function $\hat{\mathcal{H}}_{\lambda}$: $\mathbb{R}^{+}\to \mathbb{R}\cup\{+\infty\}$, where
\begin{equation}
\begin{aligned}
\hat{\mathcal{H}}_{\lambda}(x)=\alpha_x \mathcal{H}_{\lambda}( \left\lfloor x \right\rfloor)  +(1-\alpha_x) \mathcal{H}_{
\lambda}(\left\lceil x \right\rceil)
\label{eq:def_f_hat}
\end{aligned}
\end{equation}
where $\alpha_x=\frac{\left\lceil \frac{1}{x} \right\rceil-\frac{1}{x}}{\left\lceil \frac{1}{x} \right\rceil-\left\lfloor \frac{1}{x} \right\rfloor}$. 
\label{lem:integral}
\end{lemma}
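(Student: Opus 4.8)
The plan is to obtain Lemma~\ref{lem:integral} as a short consequence of the midpoint convexity proved in Lemma~\ref{lem:mid}, using that for a function of a single integer variable, integral convexity, midpoint convexity, and monotonicity of the successive differences all coincide. So the one substantive thing I have to check is that the forward differences of $\mathcal{H}_\lambda$ are nondecreasing, after which the convex extension can be written down explicitly.

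First I would record that $\mathcal{H}_\lambda(i) = H(\delta_1,\dots,\delta_i \mid \sum_{j=1}^i \delta_j)$ is finite for every $i \in \mathbb{Z}^{+}$ (it is at most $\log\binom{i}{\lfloor i/2\rfloor}$), so the differences $d_k := \mathcal{H}_\lambda(k+1) - \mathcal{H}_\lambda(k)$ are well defined. Specializing Lemma~\ref{lem:mid} to $a = k$ and $b = k+2$, for which $\lfloor \tfrac{a+b}{2}\rfloor = \lceil \tfrac{a+b}{2}\rceil = k+1$, gives $\mathcal{H}_\lambda(k) + \mathcal{H}_\lambda(k+2) \ge 2\,\mathcal{H}_\lambda(k+1)$, i.e. $d_{k+1} \ge d_k$ for every $k$. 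Hence $(d_k)_{k\ge 1}$ is nondecreasing. (Incidentally, \eqref{eq:1} with $m = k+1$, $n = k$ also gives $d_k = H(\delta_1 \mid \sum_{i=1}^{k+1}\delta_i) \ge 0$, so $\mathcal{H}_\lambda$ is itself nondecreasing, though this is not needed.)

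Next I would exhibit the global convex extension directly. Let $\hat{\mathcal{H}}_\lambda : \mathbb{R}^{+} \to \mathbb{R}\cup\{+\infty\}$ equal $+\infty$ below $1$ and, on $[1,\infty)$, be the continuous function that is linear on each cell $[k,k+1]$ and agrees with $\mathcal{H}_\lambda$ at the integers, i.e. $\hat{\mathcal{H}}_\lambda(x) = \mathcal{H}_\lambda(\lfloor x\rfloor) + (x-\lfloor x\rfloor)\,d_{\lfloor x\rfloor}$. This is exactly the right-hand side of~\eqref{eq:def_f_hat} once $\alpha_x$ is read as the barycentric weight $\lceil x\rceil - x$ for which $x = \alpha_x\lfloor x\rfloor + (1-\alpha_x)\lceil x\rceil$. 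The slope of $\hat{\mathcal{H}}_\lambda$ on $[k,k+1]$ is $d_k$, and by the previous paragraph these slopes are nondecreasing along $[1,\infty)$; a continuous piecewise-linear function with nondecreasing slopes is convex, and assigning $+\infty$ outside $[1,\infty)$ preserves convexity. Thus $\hat{\mathcal{H}}_\lambda$ is a globally convex extension of $\mathcal{H}_\lambda$, which is precisely the assertion that $\mathcal{H}_\lambda$ is integrally convex in the sense of~\cite[(2.5)]{Murota}: in one integer variable the local convex extension of a discrete function over its unit cells $[k,k+1]$ is just this piecewise-linear interpolant.

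I expect no serious obstacle; the real content is Lemma~\ref{lem:mid}, and the rest is bookkeeping. The two points needing care are (i) recognizing that, in one integer variable, integral convexity reduces to ``the piecewise-linear interpolant over the intervals $[k,k+1]$ is convex,'' equivalently to $d_{k+1}\ge d_k$; and (ii) matching that interpolant to formula~\eqref{eq:def_f_hat}, i.e. reading its weight $\alpha_x$ as $\lceil x\rceil - x$ rather than as the quantity $\alpha_\epsilon$ of~\eqref{eq:alpha} (which is the analogous weight for $1/\omega$ used in the sampling argument). Everything else is immediate from the monotonicity of the differences established in the second paragraph.
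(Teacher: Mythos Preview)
Your argument is correct. Both you and the paper derive the lemma from the midpoint convexity of Lemma~\ref{lem:mid}; the paper's proof is a one-line appeal to \cite[Theorem~2.4]{Murota} (midpoint convexity of a discrete function implies its piecewise-linear interpolant is globally convex), whereas you unpack this in the one-variable case by specializing Lemma~\ref{lem:mid} to $a=k$, $b=k+2$ to get nondecreasing forward differences $d_k$, and then noting that a continuous piecewise-linear function with nondecreasing slopes is convex. Your route is more elementary and self-contained, at the cost of being specific to $\mathbb{Z}$; the paper's citation covers the general integrally-convex framework but hides the (trivial, here) content. You are also right to flag that the weight in~\eqref{eq:def_f_hat} must be read as the barycentric coefficient $\alpha_x=\lceil x\rceil-x$ (so that $x=\alpha_x\lfloor x\rfloor+(1-\alpha_x)\lceil x\rceil$), not as the quantity in~\eqref{eq:alpha} with $1/x$ inside; the latter appears to be a typo in the lemma statement, since the paper only ever applies~\eqref{eq:def_f_hat} at $x=1/\omega$, where the two formulas coincide.
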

\begin{proof}
Applying Lemma~\ref{lem:mid}, this is a direct result from~\cite[Theorem~2.4]{Murota}, which states that a discrete function satisfying mid-point convexity can be extended to a convex continuous function through linear interpolation. 
\end{proof}
%
%\begin{theorem}
%Consider sampling a Bernoulli arrival process $\boldsymbol{\delta}=\{\delta_1,\delta_2, \cdots\}$ at time $\mathbf{A}=\{A_1,A_2,\cdots\}$, where $\delta_i$'s are i.i.d. $Bernoulli(\lambda)$ and the sampling rate $\underset{k\to\infty}{\lim}\frac{k}{A_k}=\omega$. 
%The periodic-sampling strategy (the sample times satisfy~\eqref{eq:attack_strategy_1}) is information-theoretic-optimal; the resulting sampled process has the maximum entropy rate, as given by 
%\begin{equation}\nonumber
%\begin{small}\begin{aligned}\alpha_\omega  H\left(\sum_{i=1}^{\left\lfloor \frac{1}{\omega} \right\rfloor}\delta_i\right)  +(1-\alpha_\omega) H\left(\sum_{i=1}^{\left\lceil \frac{1}{\omega} \right\rceil}\delta_i\right),\end{aligned}\end{small} 
%\end{equation}
%where \begin{small}$\alpha_{\omega}=\frac{\left\lceil \frac{1}{\omega} \right\rceil-\frac{1}{\omega}}{\left\lceil \frac{1}{\omega} \right\rceil-\left\lfloor \frac{1}{\omega} \right\rfloor}$\end{small}.
%\label{theorem:periodic_sampling}
%\end{theorem}

{\bf Proof of Theorem~\ref{theorem:periodic_sampling}:}
The entropy of the sampled process satisfies
\begin{equation}
\begin{aligned}
H\left(N\left(A_1\right),N\left(A_2\right),\cdots,N\left(A_k\right)\right) &\overset{(a)}{=} I\left(\boldsymbol{\delta}^{A_k-1};N\left(A_1\right),N\left(A_2\right),\cdots,N\left(A_k\right)\right)\\
=&I\left(\boldsymbol{\delta}^{A_k-1};N\left(A_2\right)-N\left(A_1\right),\cdots,N\left(A_k\right)-N\left(A_{k-1}\right)\right)\\
 \overset{(b)}{=}& H\left(\boldsymbol{\delta}^{A_k-1}\right) - \sum_{i=1}^{k-1}H\left(\boldsymbol{\delta}^{A_{i+1}-1}_{A_i}\big|N\left(A_{i+1}\right)-N\left(A_{i}\right)\right),
 \end{aligned}
\label{eq:app_temp_10}
\end{equation}
where $(a)$ follows from that  $N(A_i)$'s are functions of $\boldsymbol{\delta}$ and $(b)$ applies the entropy chain rule.

Define $n_r$ to be the number of elements in the sequence $\{A_2-A_1, \cdots, A_k-A_{k-1}\}$ that take value of $r$, then $\sum_{r=1}^{\infty} n_r=k$.
The conditional entropy in~\eqref{eq:app_temp_10} can be rewritten as   
\begin{equation}
\begin{aligned}
\sum_{i=1}^{k-1}H\left(\boldsymbol{\delta}^{A_{i+1}-1}_{A_i}\big|N\left(A_{i+1}\right)-N\left(A_{i}\right)\right)&=  \sum_{r=1}^{\infty}n_rH\left(\boldsymbol{\delta}^{r}\big|\sum_{j=1}^{r}\delta_j\right)\\
&=\sum_{r=1}^{\infty}n_r{\mathcal{H}}_\lambda(r)\\
&\overset{(c)}{=}\sum_{r=1}^{\infty}n_r\hat{\mathcal{H}}_\lambda(r),
\end{aligned}
\label{eq:app_temp_11}
\end{equation}
where function $\mathcal{H}_{\lambda}(\cdot)$ and $\hat{\mathcal{H}}_{\lambda}(\cdot)$ are defined according to~\eqref{eq:def_f} and~\eqref{eq:def_f_hat}, respectively, and $(c)$ holds because $\mathcal{H}_{\lambda}(\cdot)$ and $\hat{\mathcal{H}}_{\lambda}(\cdot)$ take the same values for integer arguments. 

Combining~\eqref{eq:app_temp_10} and~\eqref{eq:app_temp_11}, the entropy rate of the sampled process is upper-bounded as given by
\begin{equation}
\begin{aligned}
\underset{k\to\infty}{\lim}\frac{H\left(N\left(A_1\right),\cdots,N\left(A_k\right)\right)}{k}&=\omega H(\delta_1)-\underset{k\to\infty}{\lim}\frac{\sum_{r=1}^{\infty}n_r\hat{\mathcal{H}}_\lambda(r)}{k}\\
&\overset{(d)}{\leq} \omega H(\delta_1)-\underset{k\to\infty}{\lim}\hat{\mathcal{H}}_\lambda\left(\frac{\sum_{r=1}^{\infty}n_rr}{k}\right)\\
&\leq\omega H(\delta_1)-\hat{\mathcal{H}}_\lambda\left(\underset{k\to\infty}{\lim}\frac{\sum_{r=1}^{\infty}n_rr}{k}\right)\\
&\overset{(e)}= \omega H(\delta_1)-\hat{\mathcal{H}}_\lambda\left(\frac{1}{\omega}\right)\\
&\overset{(f)}=\alpha_\omega  H\left(\sum_{i=1}^{\left\lfloor \frac{1}{\omega} \right\rfloor}\delta_i\right)  +(1-\alpha_\omega) H\left(\sum_{i=1}^{\left\lceil \frac{1}{\omega} \right\rceil}\delta_i\right),
\end{aligned}
\label{eq:app_temp_12}
\end{equation}
where $(d)$ applies Jensen's inequality~\cite[(9.1.3.1)]{krantz95}, $(e)$ follows from $\omega =\underset{k\to\infty}{\lim}\frac{k}{{A}_k}$, and $(f)$ follows from~\eqref{eq:def_f} and~\eqref{eq:def_f_hat}. 

Finally, it is easy to verify that the sampled process by the periodic-sampling strategy has the same entropy rate as given by the bound in~\eqref{eq:app_temp_12}, 
which completes the proof for~\eqref{eq:periodic_sampling_main}.

\subsection{Proof of Achievability in Theorem~\ref{theo:fcfs}}
\label{app:fcfs}
For the achievability of~\eqref{eq:fcfs_main}, we need to show
\begin{equation}
\begin{aligned}
&\mathcal{L}_{FCFS}(\lambda)\geq\nonumber(1-\lambda)\left( \alpha_{1-\lambda}
{H\left(\sum_{i=1}^{\left\lfloor \frac{1}{1-\lambda} \right\rfloor}\delta_i\right)} + (1-\alpha_{1-\lambda}){H\left(\sum_{i=1}^{\left\lceil \frac{1}{1-\lambda} \right\rceil}\delta_i\right)}\right).
\end{aligned}
\label{eq:achievability}
\end{equation}
\begin{proof}
Let $\mathbf{D}^*=\{D^*_1,D^*_2,\cdots\}$ denote the departure times of the attacker's jobs applying
the {periodic sampling} attack strategy defined in~\eqref{eq:attack_strategy_1}. We have a lower bound on the information leakage as 
\begin{equation}
\begin{aligned}
\mathcal{L}_{FCFS} (\lambda)
&\geq \underset{
\substack{\mathbf{A}^*:\underset{k\to\infty}{\lim}\frac{k}{{A}^*_k}<1-\lambda}}
{\max}\text{ }
\underset{n\to\infty}{\lim}\frac{I\left(\boldsymbol{\delta}^{n}; {\mathbf{A}}^{{*m}},{\mathbf{D}}^{{*m}}\right)}{n}.
\end{aligned}
\label{eq:fcfs_lb}
\end{equation}
%where ${m}=\underset{}{\sup}\{k:{A}_k\leq n\}$ is the total number of attacker's jobs that have arrived in the first $n$ time slots.
Rewrite the mutual information in~\eqref{eq:fcfs_lb} as follows:
\begin{equation}
\begin{aligned}
I\left(\boldsymbol{\delta}^{n}; {\mathbf{A}}^{{*m}},{\mathbf{D}}^{{*m}}\right)\overset{(a)}{=}& I\left(\boldsymbol{\delta}^{n};{\mathbf{A}}^{{*m}}, q\left({A}^*_1\right), q\left({A}^*_2\right),\cdots, q\left({A}^*_{{m}}\right)\right)\\
\overset{(b)}{=}& I\left(\boldsymbol{\delta}^{n}; {\mathbf{A}}^{{*m}}, {\mathbf{X}}^{{*m}}\right)-I\left(\boldsymbol{\delta}^{n}; {\mathbf{A}}^{{*m}}, {\mathbf{X}}^{{*m}}\big|{\mathbf{A}}^{{*m}}, q\left({A}^*_1\right), \cdots, q\left({A}^*_{{m}}\right)\right)\\
=&I\left(\boldsymbol{\delta}^{n}; {\mathbf{A}}^{{*m}}, {\mathbf{X}}^{{*m}}\right)
-H\left({\mathbf{X}}^{{*m}}\big|{\mathbf{A}}^{{*m}}, q\left({A}^*_1\right),\cdots, q\left({A}^*_{{m}}\right)\right)\\
\overset{(c)}{=}&H\left({\mathbf{X}}^{{{*m}}}\big| {\mathbf{A}}^{{{*m}}}\right)-H\left({\mathbf{X}}^{{{*m}}}\big|{\mathbf{A}}^{{*m}}, q\left({A}^*_1\right),\cdots, q\left({A}^*_{{m}}\right)\right),
\end{aligned}
\label{eq:temp_0}
\end{equation}
where $(a)$ follows from~\eqref{eq:queue_service}, ${X}^*_k$ is the total number of the user's jobs that have arrived between the times ${A}^*_{k-1}$ and ${A}^*_{k}$, 
$(b)$ follows from the Markov chain in~\eqref{eq:info_flow}, and $(c)$ holds because ${\mathbf{A}}^{{*m}}$ is independent of $\boldsymbol{\delta}^{n}$. 

Substituting~\eqref{eq:temp_0} into~\eqref{eq:fcfs_lb}, we have
\begin{equation}
\begin{aligned}
 \mathcal{L}_{FCFS}(\lambda) \geq &\underset{\omega:\omega<1-\lambda}{\max}\omega\underset{{{m}}\to\infty}{\lim} \frac{H\left({\mathbf{X}}^{{{*m}}}\big| {\mathbf{A}}^{{{*m}}}\right)-H\left({\mathbf{X}}^{{{*m}}}\big|{\mathbf{A}}^{{*m}}, q\left({A}^*_1\right),\cdots, q\left({A}^*_{{m}}\right)\right)}{{{m}}}.
\end{aligned}
\label{eq:temp_1}
\end{equation}

Applying the entropy chain rule to the second term in~\eqref{eq:temp_1},  we have
\begin{equation}
\begin{aligned}
\underset{{{m}}\to\infty}{\lim} \frac{H\left({\mathbf{X}}^{{{*m}}}\big|{\mathbf{A}}^{{*m}}, q\left({A}^*_1\right), \cdots, q\left({A}^*_{{m}}\right)\right)}{{{m}}}&\overset{}{=} \underset{{{m}}\to\infty}{\lim}  \frac{\sum_{k=1}^{{{m}}} H\left({X}^*_k\big|{\mathbf{X}}^{*k-1},  \mathbf{{A}}^{{{*m}}}, q\left({A}^*_1\right), \cdots, q\left({A}^*_{{{m}}}\right) \right)}{{{m}}}\\
&\overset{(d)}{=}  \underset{{{m}}\to\infty}{\lim} \frac{  \sum_{k=1}^{{{m}}} H\left({X}^*_k\big|{A}^*_{k}-{A}^*_{k-1}, q\left({A}^*_{k-1}\right), q\left({A}^*_{k}\right) \right)}{{{m}}},
\end{aligned}
\label{eq:fcfs_chain}
\end{equation}
where $(d)$ follows from the update equation of queue length seen by the attacker's jobs, which is given by
\begin{equation}
\begin{aligned}
q\left({A}^*_k\right) = \left(q\left({A}^*_{k-1}\right)+1+{X}^*_k-\left({A}^*_k-{A}^*_{k-1}\right) \right)_{+}.
\end{aligned}
\label{eq:fcfs_q_update}
\end{equation}

It can be shown that $\left\{{A}^*_{k}-{A}^*_{k-1}, q\left({A}^*_{k-1}\right), q\left({A}^*_{k}\right)\right\}$, $k\in\mathbb{Z}$, forms a positive recurrent Markov chain (See Appendix~\ref{app:fcfs_queue} for the proof), which implies the equivocation rate in~\eqref{eq:fcfs_chain} converges as $k\to\infty$, with the limit determined by the stationary distribution of $\left\{{A}^*_{k}-{A}^*_{k-1}, q\left({A}^*_{k-1}\right), q\left({A}^*_{k}\right)\right\}$. 
Let $\{\mathcal{T}, Q_1, Q_2\}$ take the stationary distribution of $\left\{{A}^*_{k}-{A}^*_{k-1}, q\left({A}^*_{k-1}\right), q\left({A}^*_{k}\right)\right\}$. From Ces\`{a}ro mean theorem \cite[Theorem~4.2.3]{Cover&Thomas:91}, \eqref{eq:fcfs_chain} can be rewritten as
\begin{equation}
\begin{aligned}
\underset{{{m}}\to\infty}{\lim} \frac{H\left({\mathbf{X}}^{{{*m}}}\big|{\mathbf{A}}^{{*m}}, q\left({A}^*_1\right), \cdots, q\left({A}^*_{{m}}\right)\right)}{{{m}}}& = H\left(\mathcal{X}|\mathcal{T}, Q_1, Q_2\right).
\label{eq:limit_1}
\end{aligned}
\end{equation}

Furthermore,  it can be shown that as $\omega\to1-\lambda$, $Q_2$ is always positive (See Appendix~\ref{app:eq_temp_4} for the proof), i.e.,
\begin{equation}
\begin{aligned}
 \underset{\omega\to 1-\lambda}{\lim} \mathbb{P}\left(Q_2 = 0\right) =0.
\label{eq:temp_4}
\end{aligned}
\end{equation} 
From the queue length update equation, 
$Q_2=\left(Q_1+1+\mathcal{X}-\mathcal{T}\right)_+
\label{eq:Q}$, we have that
\begin{equation}
\begin{aligned}
 H\left(\mathcal{X}|\mathcal{T}, Q_1, Q_2>0\right) = 0.
 \label{eq:temp_3}
 \end{aligned}
\end{equation}
\eqref{eq:temp_4} and~\eqref{eq:temp_3} imply that
\begin{equation}
\begin{aligned}
\underset{\omega\to 1-\lambda}{\lim} H\left(\mathcal{X}|\mathcal{T}, Q_1, Q_2\right) &=0.
\end{aligned}
\label{eq:temp_5}
\end{equation}

Substituting~\eqref{eq:limit_1} into~\eqref{eq:temp_1} and applying~\eqref{eq:temp_5}, we have 
\begin{equation}
\begin{aligned}
 \mathcal{L}_{FCFS}(\lambda)&\geq \underset{\omega:\omega<1-\lambda}{\max}
\omega \left(\underset{{{m}}\to\infty}{\lim} \frac{H\left({\mathbf{X}}^{{{*m}}}\big| {\mathbf{A}}^{{{*m}}}\right)}{{{m}}}- H\left(\mathcal{X}|\mathcal{T}, Q_1, Q_2\right)\right)\\
&\geq  \underset{\omega\to 1-\lambda}{\lim} \omega\left(\underset{{{m}}\to\infty}{\lim} \frac{H\left({\mathbf{X}}^{{{*m}}}\big| {\mathbf{A}}^{{{*m}}}\right)}{{{m}}} - H\left(\mathcal{X}|\mathcal{T}, Q_1, Q_2\right) \right) \\
&=  \underset{\omega\to 1-\lambda}{\lim}\omega\underset{{{m}}\to\infty}{\lim} \frac{H\left({\mathbf{X}}^{{{*m}}}\big| {\mathbf{A}}^{{{*m}}}\right)}{{{m}}} . 
\end{aligned}
\label{eq:temp_2}
\end{equation}

Since ${X}^*_i$'s are i.i.d. $Binomial({A}^*_{i}-{A}^*_{i-1},\lambda)$ random variables, we have
\begin{equation}
\begin{aligned}
\underset{{{m}}\to\infty}{\lim} \frac{H\left({\mathbf{X}}^{{{*m}}}\big| {\mathbf{A}}^{{{*m}}}\right)}{{{m}}} \overset{}{=}\alpha_\omega H\left(\sum_{i=1}^{\left\lfloor \frac{1}{\omega} \right\rfloor}\delta_i\right)  + (1-\alpha_\omega) H\left(\sum_{i=1}^{\left\lceil \frac{1}{\omega} \right\rceil}\delta_i\right),
\end{aligned}
\label{eq:limit_2}
\end{equation}
where $\alpha_\omega=\frac{\left\lceil \frac{1}{\omega} \right\rceil-\frac{1}{\omega}}{\left\lceil \frac{1}{\omega} \right\rceil-\left\lfloor \frac{1}{\omega} \right\rfloor}$.
Substituting~\eqref{eq:limit_2} into~\eqref{eq:temp_2} proves~\eqref{eq:achievability}. 
\end{proof}

\subsection{Proof of~\eqref{eq:temp_4}}
\label{app:eq_temp_4}

Recall that $\mathcal{T}$ is distributed as~\eqref{eq:attack_strategy_1},  $\mathcal{X}$ is $Binomial(\mathcal{T}, \lambda)$, and $Q_1$ and $Q_2$ have identical distribution satisfying
 \begin{equation}
\begin{aligned}Q_2=\left(Q_1+1+\mathcal{X}-\mathcal{T}\right)_+.
\label{eq:Q_app}
\end{aligned}
\end{equation}
We need to prove that $ \underset{\omega\to 1-\lambda}{\lim} \mathbbm{P}\left(Q_2 = 0\right) =0.
\label{eq:temp_4_app}$
\begin{proof}
First, \eqref{eq:Q_app} can be rewritten as
\begin{equation}\begin{aligned}
Q_2=\left(Q_1 + {Y} - \left\lceil \frac{1}{\omega} \right\rceil \right)_+,
\label{eq:temp_18_app}
\end{aligned}\end{equation}
where
\begin{equation}
{Y}=\begin{cases} 2+X_1  & \mbox{w.p. } \alpha_\omega \\ 1+X'_1 & \mbox{w.p. } 1-\alpha_\omega \end{cases},
\end{equation}
where $X_1\sim Binomial\left(\left\lfloor \frac{1}{\omega} \right\rfloor,\lambda\right)$, $X'_1\sim Binomial\left(\left\lceil \frac{1}{\omega} \right\rceil ,\lambda\right)$, and  $\alpha_\omega=\frac{\left\lceil \frac{1}{\omega} \right\rceil-\frac{1}{\omega}}{\left\lceil \frac{1}{\omega} \right\rceil-\left\lfloor \frac{1}{\omega} \right\rfloor}$.

Taking the $z$-transform of ${Q}_2$, we have
\begin{equation}
\begin{aligned}
\mathcal{Q}(z)=\frac{\sum_{k=0}^{\left\lceil \frac{1}{\omega} \right\rceil-2}\sum_{r=0}^{\left\lceil \frac{1}{\omega} \right\rceil-2-k}p_ku_r\left(z^{\left\lceil \frac{1}{\omega} \right\rceil-1}-z^{k+r}\right)}{z^{\left\lceil \frac{1}{\omega} \right\rceil-1}-\mathcal{Y}(z)}
\label{eq:q_z_app}
\end{aligned}
\end{equation}
where $\mathcal{Y}(z)$ is the $z$-transform of $Y$ as given by
\begin{equation}
\begin{aligned}
\mathcal{Y}(z) = \alpha_{\omega}z(1-\lambda+\lambda z)^{\left\lceil\frac{1}{\omega}\right\rceil} +
(1-\alpha_{\omega})z^2(1-\lambda+\lambda z)^{\left\lfloor\frac{1}{\omega}\right\rfloor},
\label{eq:y_z_app}
\end{aligned}
\end{equation}
and $p_k=\mathbbm{P}\left(Q_2=k\right)$ and $u_r=\mathbbm{P}\left(Y=r\right)$.

Substituting~\eqref{eq:y_z_app} into~\eqref{eq:q_z_app} and letting $z=1$ on both sides, we have
\begin{equation}
\begin{aligned}
\sum_{k=0}^{\left\lceil \frac{1}{\omega} \right\rceil-2}\left(\sum_{r=0}^{\left\lceil \frac{1}{\omega} \right\rceil-2-r}u_r\left(\left\lceil \frac{1}{\omega} \right\rceil-1-k-r\right)\right) = \left\lceil \frac{1}{\omega} \right\rceil \left(1-\omega-\lambda\right).
\end{aligned}
\end{equation}
Dropping the terms with $r>1$ on the left hand side, we further get 
\begin{equation}
\begin{aligned}
u_1\left(\left\lceil \frac{1}{\omega} \right\rceil -1 -k\right)\sum_{k=0}^{\left\lceil \frac{1}{\omega} \right\rceil-2}p_k\leq \left\lceil \frac{1}{\omega} \right\rceil\left(1-\omega-\lambda\right).
\end{aligned}
\end{equation}
Plugging in the values of $u_1$ and taking the limit as $\omega\to 1-\lambda$, we have
\begin{equation}
\begin{aligned}
\underset{\omega\to1-\lambda}{\lim} \sum_{k=0}^{\left\lceil \frac{1}{\omega} \right\rceil-2} p_k \leq \underset{\omega\to1-\lambda}{\lim}  \frac{\left\lceil \frac{1}{\omega} \right\rceil\left(1-\omega-\lambda\right)}{(1-\alpha_\omega) \left(1-\lambda^{\left\lceil \frac{1}{\omega} \right\rceil} \right)} =0,
\end{aligned}
\end{equation}
which readily implies $p_0=0$, which is the desired result. 
\end{proof}

\subsection{Queuing Analysis of the FCFS scheduler Under a Periodic Sampling Attack}
\label{app:fcfs_queue}

\begin{theorem}
In an FCFS scheduler with total job arrival rate below 1, when the attacker   applies the periodic-sampling strategy defined in~\eqref{eq:attack_strategy_1}, 
the tuples $\left\{{A}^*_{k}-{A}^*_{k-1}, q\left({A}^*_{k-1}\right), q\left({A}^*_{k}\right)\right\}$, $k\in\mathbb{Z}$, form a positive recurrent Markov chain.
\label{theo:app_markov}
\end{theorem}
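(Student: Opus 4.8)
The plan is to reduce the claim to a negative-drift analysis of the marginal queue-length chain seen by the periodic-sampling attacker, and then lift positive recurrence to the augmented triple. First I would observe that, writing $q_k:=q(A^{*}_k)$, the update equation~\eqref{eq:fcfs_q_update} is a Lindley-type reflected recursion
\[
q_{k+1}=\bigl(q_k+\xi_{k+1}\bigr)_{+},\qquad \xi_{k+1}:=1+X^{*}_{k+1}-\mathcal{T}_{k+1},
\]
where the gaps $\mathcal{T}_{k+1}=A^{*}_{k+1}-A^{*}_k$ are i.i.d.\ as in~\eqref{eq:attack_strategy_1}, and, given $\mathcal{T}_{k+1}$, the count $X^{*}_{k+1}\sim\mathrm{Binomial}(\mathcal{T}_{k+1},\lambda)$ is independent of $q_k$ and the past (the user-arrival slots in the $(k+1)$-st window are disjoint from the earlier ones, and the gaps are independent of $\boldsymbol{\delta}$). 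Thus $\{q_k\}$ is a time-homogeneous Markov chain on $\mathbb{Z}_{\ge 0}$ with i.i.d.\ increments $\xi_{k+1}$, and the triple $\{(\mathcal{T}_k,q_{k-1},q_k)\}$ is Markov as well, since its next value is a fixed deterministic function of $q_k$ together with the fresh pair $(\mathcal{T}_{k+1},X^{*}_{k+1})$. Moreover, once $\{q_k\}$ is shown positive recurrent with stationary law $\pi$, the triple inherits a stationary law (draw $q_{k-1}\sim\pi$, an independent gap $\mathcal{T}_k$, and compute $q_k$, which is again $\pi$-distributed in steady state); being irreducible on its communicating class it is then positive recurrent, so it suffices to prove positive recurrence of $\{q_k\}$.

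Second, I would verify that $\{q_k\}$ is irreducible and aperiodic on $\mathbb{Z}_{\ge 0}$ (taking $0<\lambda<1$, as in the side-channel setting; the case $\lambda=0$ is immediate, since then $\{q_k\}$ is non-increasing and absorbed at $0$). Since $\omega<1-\lambda<1$ one has $\lceil 1/\omega\rceil\ge 2$, so from any $q\ge 1$ the chain strictly decreases with positive probability by drawing the gap $\mathcal{T}=\lceil 1/\omega\rceil$ and no user arrival ($X^{*}=0$); iterating, $0$ is reachable from every state. Conversely $q\mapsto q+1$ occurs with positive probability by drawing an admissible gap $t$ together with the maximal count $X^{*}=t$ (probability at least $\lambda^{t}>0$), so every state is reachable from $0$. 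Aperiodicity follows from the self-loop at $0$: drawing $X^{*}=0$ leaves the state at $(1-\mathcal{T})_{+}=0$ because $\mathcal{T}\ge 1$, which has probability at least $(1-\lambda)^{\lceil 1/\omega\rceil}>0$.

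Third, and at the heart of the argument, I would compute the drift $\mathbb{E}[\xi_{k+1}]$. Since $\mathbb{E}[X^{*}_{k+1}\mid\mathcal{T}_{k+1}]=\lambda\mathcal{T}_{k+1}$ we get
\[
\mathbb{E}[\xi_{k+1}]=1-(1-\lambda)\,\mathbb{E}[\mathcal{T}_{k+1}].
\]
The definition~\eqref{eq:alpha} of $\alpha_\omega$ is precisely the one for which $\mathbb{E}[\mathcal{T}_{k+1}]=\alpha_\omega\lfloor 1/\omega\rfloor+(1-\alpha_\omega)\lceil 1/\omega\rceil=1/\omega$, so $\mathbb{E}[\xi_{k+1}]=1-(1-\lambda)/\omega=-(1-\lambda-\omega)/\omega$, which is strictly negative exactly because the total-rate hypothesis gives $\omega+\lambda<1$. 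An irreducible, aperiodic reflected random walk on $\mathbb{Z}_{\ge 0}$ with i.i.d.\ increments of finite negative mean is positive recurrent (equivalently, Foster's theorem applies with test function $V(q)=q$; the positive-part truncation affects only the finitely many states $q<\lceil 1/\omega\rceil-1$, where the expected increment is still bounded by $1+\lambda/\omega<\infty$). Applying this to $\{q_k\}$ gives positive recurrence, and hence, by the first paragraph, of the triple $\{(\mathcal{T}_k,q_{k-1},q_k)\}$.

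I expect the substantive work to be organizational rather than conceptual. The point requiring the most care is isolating the effect of the positive-part truncation to a finite set of queue lengths, so that the clean drift value $-(1-\lambda-\omega)/\omega$ governs the tail, together with cleanly transferring positive recurrence from $\{q_k\}$ to the triple (noting that the triple is a Markov chain driven by $\{q_k\}$ plus independent noise, so that a stationary law and finite mean return times carry over). The one step that genuinely uses the periodic-sampling structure is the identity $\mathbb{E}[\mathcal{T}_{k+1}]=1/\omega$, which is exactly what makes the stability margin $1-\lambda-\omega$ coincide with the negative-drift margin.
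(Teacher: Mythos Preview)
Your proposal is correct and follows essentially the same route as the paper: both reduce to the marginal chain $\{q(A^{*}_k)\}$, apply Foster--Lyapunov with the linear test function $V(q)=q$, obtain the drift $-(1-\lambda-\omega)/\omega$ outside the finite set $\{q<\lceil 1/\omega\rceil\}$, and then lift positive recurrence to the triple via the induced stationary law. Your write-up is somewhat more careful in that it verifies irreducibility and aperiodicity explicitly and spells out the identity $\mathbb{E}[\mathcal{T}_{k+1}]=1/\omega$, whereas the paper leaves these implicit, but the argument is the same.
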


\begin{proof}
We first prove that $\left\{q\left({A}^*_{k}\right)\right\}$, $k\in \mathbb{Z}$, form a positive recurrent Markov chain. The Markovian property directly follows from the FCFS policy and memoryless property of the user's arrival process; given the queue length at ${A^*}_{k}$, future queue states are independent with the past arrival history. 

We show the ergodicity using a linear Lyapunov function defined as
\begin{equation}
\begin{aligned}
V\left(q\left({A}^*_k\right)\right) = q\left({A}^*_k\right), k\in\mathbb{Z}.
\end{aligned}
\end{equation}

Recall that the arrival rates of the user and attacker by $\lambda$ and $\omega$, respectively. 
If $q\left({A}^*_k\right)\geq  \left\lceil \frac{1}{\omega} \right\rceil $, the scheduler is guaranteed to be busy from ${A}^*_{k}$ to ${A}^*_{k+1}$. Hence, from~\eqref{eq:fcfs_q_update},
\begin{equation}
\begin{aligned}
q\left({A^*}_{k+1}\right) = 
q\left({A^*}_k\right) + 1 + X^*_k -\left({A}^*_{k+1}-{A}^*_k\right). 
\end{aligned}
\end{equation}
As $X^*_k$ has mean as $\frac{\lambda}{\omega}$, and ${A}^*_{k+1}-{A}^*_k$ has mean as $\frac{1}{\omega}$,
 the drift of the Lyapunov function in this case is given by 
\begin{equation}
\begin{aligned}
\mathbb{P}V\left(q\left({A}^*_k\right)\right) -  V\left(q\left({A}^*_k\right)\right) = -\frac{1-\omega-\lambda}{\omega}. 
\end{aligned}
\label{eq:lyapunov_1}\end{equation}
Additionally, during $[{A}^*_k,{A}^*_{k+1})$, the buffer queue length can grow at most 1, so the drift is bounded by
\begin{equation}
\begin{aligned}
\mathbb{P}V\left(q\left({A}^*_k\right)\right) -  V\left(q\left({A}^*_k\right)\right) \leq 1.  
\end{aligned}
\label{eq:lyapunov_2}\end{equation}
Combining~\eqref{eq:lyapunov_1} and~\eqref{eq:lyapunov_2},  the drift in any queue state is bound by 
\begin{equation}
\begin{aligned}
\mathbb{P}V\left(q\left({A}^*_k\right)\right) -  V\left(q\left({A}^*_k\right)\right)  \leq -\epsilon +\mathbbm{1}_{\left\{q\left({A}^*_k\right)< \left\lceil \frac{1}{\omega} \right\rceil\right\}},
\label{eq:app_temp_1}
\end{aligned}
\end{equation}
where $\epsilon = \frac{1-\omega-\lambda}{\omega}$. Following Foster-Lyapunov stability~\cite[Theorem~5]{Foster}, \eqref{eq:app_temp_1} implies the Markov chain $\left\{q\left({A}^*_{k}\right)\right\}$, $k\in\mathbb{Z}$, is positive recurrent. 

For the same reason we argue for the Markovian property of chain $\left\{q\left({A}^*_{k}\right)\right\}$, $k\in\mathbb{Z}$, $\left\{{A}^*_{k}-{A}^*_{k-1}, q\left({A}^*_{k-1}\right), q\left({A}^*_{k}\right)\right\}$, $k\in\mathbb{Z}$,  also form a Markov chain. Additionally,  a stationary state distribution of $\left\{{A}^*_{k}-{A}^*_{k-1}, q\left({A}^*_{k-1}\right), q\left({A}^*_{k}\right)\right\}$, $k\in\mathbb{Z}$, can be derived from the stationary state distribution of $\left\{q\left({A}^*_{k}\right)\right\}$ and~\eqref{eq:attack_strategy_1}. The existence of the stationary distribution implies that $\left\{{A}^*_{k}-{A}^*_{k-1}, q\left({A}^*_{k-1}\right), q\left({A}^*_{k}\right)\right\}$, $k\in\mathbb{Z}$,  must be positive recurrent~\cite[Definition~3.1]{gilks95}. 
\end{proof}

\subsection{Busy Period Distribution of the Round Robin Scheduler}
\label{app:rr_busy}

Consider a round robin scheduler serving a user and an attacker.
The user sends jobs according to a Bernoulli arrival process with rate $\lambda\leq0.5$, and the attacker applies the nonstop monitoring attack, 
where the arrival and departure times satisfy~\eqref{eq:nonstop}. 
We prove the busy periods seen by the attacker, $B_r$'s as defined in~\eqref{eq:rr_busy_def}, are i.i.d. distributed as~\eqref{eq:rr_busy_1} and have mean 
\begin{equation}
\begin{aligned}
\mathbb{E}\left[B_r\right] = \underset{n\to\infty}{\lim}\frac{\sum_{r=1}^{n}B_r}{n}= \frac{1}{1-2\lambda}.
\label{eq:rr_busy_mean_app}
\end{aligned}
\end{equation}
\begin{proof}
Write the update equation of queue lengths seen by the attacker as 
\begin{equation}
\begin{aligned}
q\left({A}_{k+1}\right) = \left( q\left({A}_{k}\right) +1+\sum_{i={A}_k+1}^{{A}_{k+1}}\delta_i - \left({A}_{k+1}-{A}_{k}\right)  \right)_+,
\label{eq:app_temp_12_1}
\end{aligned}
\end{equation}
for $k\in\mathbb{Z}$. 

From~\eqref{eq:nonstop} and~\eqref{eq:rr-nonstop}, we have
\begin{equation}
\begin{aligned}
{A}_{k+1}-{A}_{k} = \begin{cases} 1 \quad \text{ if } q\left({A}_k\right)=0\\ 2 \quad \text{ if } q\left({A}_k\right)>0\end{cases}.
\label{eq:app_temp_13}
\end{aligned}
\end{equation}

Given~\eqref{eq:app_temp_12_1} and~\eqref{eq:app_temp_13}, we can draw a Markov chain formed by $\left\{q\left({A}_{k}\right)\right\}$, $k\in\mathbb{Z}$, as depicted by Figure~\ref{fig:rr_busy_markov}. 
The length of busy period $B_r$ is simply a  function of the number of transitions, $s$, it takes to return back to state $0$ (starting from state $0$), as given by
\begin{equation}
B_r=2s-1, \quad r\in\mathbb{Z}.
\label{eq:app_temp_14}
\end{equation}
Clearly, $s$ has the same PMF as $B_r$'s.

\begin{figure}[t]
   \centering
   \includegraphics[width=0.5\columnwidth]{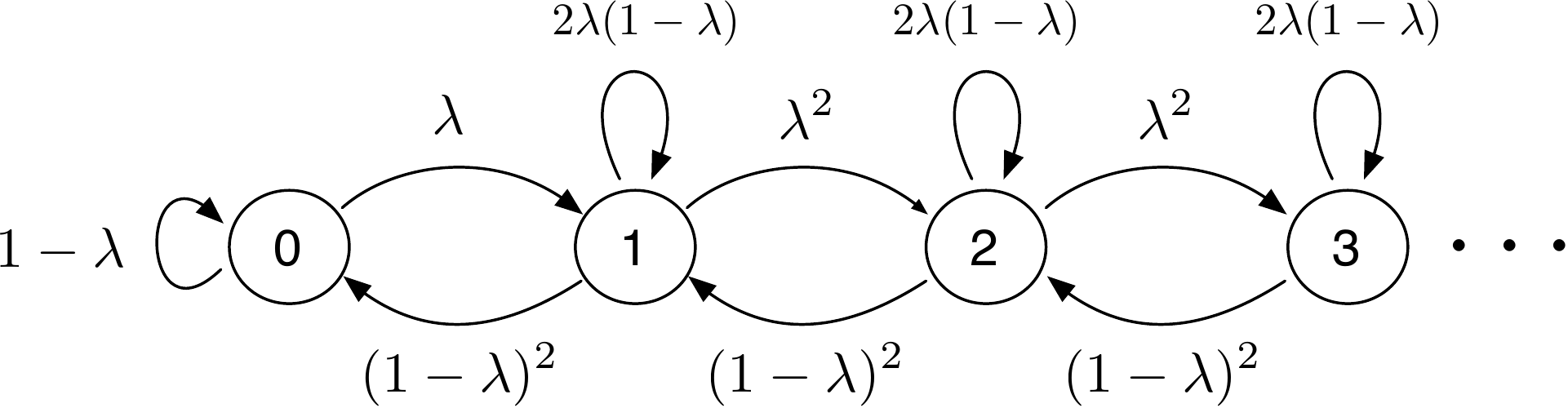}
   \caption{The Markov chain of queue length when attacking the round robin scheduler with the nonstop monitoring strategy.}
   \label{fig:rr_busy_markov}
\end{figure} 

From the Markov chain in Figure~\ref{fig:rr_busy_markov},  we know
\begin{equation}\mathbb{P}(s=1)=1-\lambda.
\label{eq:app_temp_15}
\end{equation} 
For $s=2$, the queue length first needs one step to jump to state 1, and then returns to state 0, which has the probability as  \begin{equation}
\mathbb{P}(s=2)=\lambda(1-\lambda)^2.\label{eq:app_temp_16}\end{equation} 
For $s>2$, after jumping to state 1 for first time, the queue state needs to experience another $s-2$ transitions before returning to state 1 for the last time and eventually returning back to state 0. 
Notice that each state transition either increases the queue length by 1, decreases the queue length by 1, or keeps the queue length unchanged. 
Therefore, the $s-2$ intermediate transitions must contain the same number of transitions that increase and decrease the queue length.
Moreover, the increase in the queue length must always lead the decrease; otherwise, the queue would return to state 0 before the require number of transitions finish.  
Define $W_{s,j}$ to be the total ways of 
 $s-2$ transitions with $j$ transitions  increasing the queue (also $j$ transitions decreasing the queue).
 Then $W_{s,j}=\binom{s-2}{2j}C_j$,  where $C_j$ is the famous Catalan number as given by
\begin{equation}
\begin{aligned}
C_j=\frac{1}{j+1}\binom{2j}{j}. 
\end{aligned}
\end{equation} 
The transition probability for $s>2$ can be calculated as
\begin{equation}
\begin{aligned}
\mathbb{P}(s=r) &= \lambda (1-\lambda)^2\sum_{j=1}^{\left\lfloor\frac{r-2}{2}\right\rfloor}W_{r,j} \left(2(1-\lambda)\lambda\right)^{r-2-2j}(1-\lambda)^{2j}\lambda^{2j}\\
 &= 2^{r-1}\lambda^{r-1}(1-\lambda)^{r} \sum_{j=1}^{\left\lfloor\frac{r-2}{2}\right\rfloor} \frac{(r-2)!2^{-2j-1}}{(r-2-2j)!j!(j+1)!}.
\end{aligned}
\label{eq:app_temp_17}
\end{equation}
\eqref{eq:app_temp_14}, \eqref{eq:app_temp_15}, \eqref{eq:app_temp_16}, and~\eqref{eq:app_temp_17} together imply~\eqref{eq:rr_busy_1}.

In each busy period, the number of queue state transitions equals to the number of attacker's  job arrivals.
Since the attacker's arrival rate in this nonstop-monitoring attack is $1-\lambda$, thus $\mathbb{E}[\frac{s}{B_r}]={1-\lambda}$, which together with~\eqref{eq:app_temp_14} prove~\eqref{eq:rr_busy_mean_app}. 
\end{proof}

\subsection{Busy Period Distribution of the WC-TDMA Scheduler}
\label{app:nitdma_busy}

Consider a WC-TDMA scheduler serving a user and an attacker, where even and odd time slots are reserved to the attacker and user, respectively.
The user sends jobs according to Bernoulli arrival process with rate $\lambda\leq0.5$, and the attacker sends jobs on all odd slots
according to~\eqref{eq:half_attack}. 
We prove the busy periods seen by the attacker, $B'_r$'s as defined in~\eqref{eq:nitdma_busy_def}, are i.i.d. distributed as~\eqref{eq:rr_busy_1} and have mean 
\begin{equation}
\mathbb{E}\left[{B}'_r\right] = \frac{2-2\lambda}{1-2\lambda}, \quad \text{for } r\in\mathbb{Z}.
\label{eq:nitdma_busy_mean_app}
\end{equation}
\begin{proof}
%Write the update equation of queue lengths seen by the attacker as 
%\begin{equation}
%\begin{small}
%q\left({A}_{k+1}\right) = \left( q\left({A}_{k}\right) +1+\sum_{i={A}_k+1}^{{A}_{k+1}}\delta_i - \left({A}_{k+1}-{A}_{k}\right)  \right)_+, 
%\label{eq:app_temp_18}
%\end{small}
%\end{equation}
%for $k\in\mathbb{Z}$. 
%
Based on~\eqref{eq:half_attack} and~\eqref{eq:app_temp_12_1}, the evolving of queue state, $\left\{q\left({A}_{k}\right)\right\}$, $k\in\mathbb{Z}$,  can be described by the same Markov chain in Figure~\ref{fig:rr_busy_markov}, and the busy period $B'_i$ is nothing but  a function of the number of transitions,  $s$, it takes to  return to state 0 starting from state 0, as given by
\begin{equation}
B'_r=2s, \quad r\in\mathbb{Z}.
\label{eq:app_temp_19}
\end{equation}

Using the same arguments in our proof of~\eqref{eq:rr_busy_1} in Appendix~\ref{app:rr_busy},  
it is clear that $B'_r$'s are distributed as~\eqref{eq:rr_busy_1}.
Additionally,  from~\eqref{eq:app_temp_14}, ~\eqref{eq:app_temp_19}, and~\eqref{eq:rr_busy_mean_app},
\begin{equation}
\begin{aligned}
\mathbb{E}\left[B'_r\right] & = \mathbb{E}\left[B_r\right]+1
=\frac{2-2\lambda}{1-2\lambda}.
\end{aligned}
\end{equation}
\end{proof}

\subsection{Proof of~\eqref{eq:temp_13}}
\label{app:eq_temp_13}
We need to prove that
\begin{equation}
\begin{aligned}
\underset{\lambda\to0}{\lim} \text{ }  \mathbb{P}\left(q\left(S_k\right) =0\right) = 1, \quad k\in\mathbb{Z},
\end{aligned}
\label{eq:temp_13_app}
\end{equation}
where $S_k = \max\{A_k,D_{k-1}\}$.

\begin{proof}
We prove this by induction. 
The base case is straightforward, considering the queue length starts with zero, i.e., $q\left(S_1\right)=0$.
Now assume it is true that $\underset{\lambda\to0}{\lim}\text{ }   \mathbb{P}\left(q(S_k) =0\right) = 1$, $\text{for } k=1, 2, \cdots, r$. Consider 
the update equation of queue length, as given by
\begin{equation}
\begin{aligned}
q\left(S_{r+1}\right) = \left(q\left(S_r\right)+1+\sum_{i=S_r}^{S_{r+1}-1}\delta_i-\left(S_{r+1}-S_r\right)\right)_+,
\label{eq:temp_13_app_1}
\end{aligned}
\end{equation}
from which we calculate the probability of empty queue as given by
\begin{equation}
\begin{aligned}
&\underset{\lambda\to0}{\lim}  \text{ } \mathbb{P}\left(q\left(S_{r+1}\right) =0\right)\\
& =  \underset{\lambda\to0}{\lim}  \sum_{i=0}^{S_{r+1}-S_{r}-1}\mathbb{P}\left(q\left(S_{r}\right) =i\right)\cdot\sum_{j=0}^{S_{r+1}-S_{r}-1-i} \binom{S_{r+1}-S_{r}}{j}(1-\lambda)^{S_{r+1}-S_{r}-j}\lambda^{j}\\
&\overset{(a)}{=}  \underset{\lambda\to0}{\lim}  \sum_{j=0}^{S_{r+1}-S_{r}-1} \binom{S_{r+1}-S_{r}}{j}(1-\lambda)^{S_{r+1}-S_{r}-j}\lambda^{j}\\
&=   1-\underset{\lambda\to0}{\lim}  \lambda^{S_{r+1}-S_{r}} \\
&=1,
\end{aligned}
\label{eq:temp_13_app_2}
\end{equation}
where $(a)$ follows from the assumption that  $\underset{\lambda\to0}{\lim}\text{ }\mathbb{P}\left(q(S_k) =0\right) = 1$. 
This completes the proof.
\end{proof}

\bibliographystyle{IEEETran}
\bibliography{thesis}

\end{document}